\newtheorem{lem}{Lemma}
\newtheorem{corol}{Corollary}
\newtheorem{thm}{Theorem}
\newtheorem{rem}{Remark}
\newtheorem{mydef}{Definition}
\def\blfootnote{\xdef\@thefnmark{}\@footnotetext}
\def\ttx{0.45}
\tikzset{
  treenode/.style = {align=center, inner sep=0pt, text centered,
    font=\sffamily},
  arn_r/.style = {treenode, circle, black, draw=black, 
    text width=1.5em, very thick},
}
\begin{document}
\title{A Secure Approach for Caching Contents in Wireless Ad Hoc Networks}
\author{Mohsen~Karimzadeh~Kiskani$^{\dag}$ and Hamid~R.~Sadjadpour$^{\dag}$}
\maketitle \thispagestyle{empty}
\begin{abstract}
Caching aims to store data locally in some nodes within the network to be able to retrieve the contents in shorter time periods. However, caching in the network did not always consider secure storage (due to the compromise between time performance and security). In this paper, a novel decentralized secure coded caching approach is proposed. In this solution, nodes only transmit coded files to avoid eavesdropper wiretappings and protect the user contents. In this technique random vectors are used to combine the contents using XOR operation. We modeled the proposed coded caching scheme by a Shannon cipher system to show that coded caching achieves asymptotic perfect secrecy. The proposed coded caching scheme significantly simplifies the routing protocol in cached networks while it reduces overcaching and achieves a higher throughput capacity compared to uncoded caching in reactive routing. It is shown that with the proposed coded caching scheme  any content can be retrieved by selecting a random path while achieving asymptotic optimum solution. We have also studied the cache hit probability and shown that the coded cache hit probability is significantly higher than uncoded caching. A secure caching update algorithm is also presented.
\end{abstract}

\begin{IEEEkeywords}
Secure Caching, Physical Layer Security, Wireless Ad Hoc Networks
\end{IEEEkeywords}

\IEEEpeerreviewmaketitle

\section{Introduction}
\label{intro_sec}
\blfootnote{Copyright (c) 2015 IEEE. Personal use of this material is permitted. However, permission to use this material for any other purposes must be obtained from the IEEE by sending a request to pubs-permissions@ieee.org.} 
\blfootnote{M. K. Kiskani$^{\dag}$ and H. R. Sadjadpour$^{\dag}$ are with the Department of Electrical Engineering, University of California, Santa Cruz. Email: \{mohsen, hamid\}@soe.ucsc.edu}

Significant advances in wireless and mobile technologies over the past decades, have made it possible for mobile users to stream high quality videos and to download large-sized contents. Video streaming applications like Netflix, Hulu and Amazon have become increasingly popular among mobile users. Close to half of all video plays were on mobile devices like tablets and smartphones\footnote{http://go.ooyala.com/rs/447-EQK-225/images/Ooyala-Global-Video-Index-Q4-2015.pdf} during the fourth quarter of 2015. 

In parallel, storage capacity of mobile devices has significantly increased without being fully utilized. 
Efficient use of this under-utilized storage  is specially important for video streaming applications that account for a large portion of the overall internet traffic.

On the other hand, {\em proprietary contents} as opposed to {\em user-generated contents} are the most important assets of many content sharing companies such as Netflix, Hulu, and Amazon. These companies use extreme measures to protect their data and therefore, they are hesitant to let the end users cache their contents locally. One possible solution is to encrypt each content before caching it locally. Encryption algorithms {reduce} the content sharing rate. Further, such algorithms are only {\em computationally secure} and an adversary is able to break them with time. For instance, Data Encryption Standard (DES) which was once the official Federal Information Processing Standard (FIPS) in US is no longer considered secure. In this paper, we  propose an {\em information theoretically} secure solution for caching contents which cannot be decoded with time. 



 

Physical layer security in wireless networks has been the subject of many recent research papers. With a focus on physical layer security in wireless ad hoc networks, we propose a novel decentralized coded caching approach in which random vectors are used to combine the contents using XOR operation. Nodes only transmit coded files and therefore an eavesdropper with a noiseless channel will not be able to decode the desired contents. Our technique is quite different from techniques which exploit wireless channel dynamics to achieve physical layer security. We demonstrate that coded caching technique is similar to Shannon cipher \cite{shannon1949communication} problem and it can  achieve asymptotic perfect secrecy during file transmissions by using a secure low bandwidth channel to exchange the decoding gains.  
The main contributions of this paper are introduction of decentralized coded caching for wireless ad hoc networks, capacity computation of coded and uncoded caching for proactive and reactive routing strategies\footnote{Notice that the notion of proactive (or reactive) routing that we study is the extreme case when complete (or no) network knowledge is available.}, proof of security,  computation of cache hit probability, and introduction of update algorithm. 

A shorter version of this paper was presented in \cite{DBLP:conf/iccnc/KiskaniS17} that does not include proofs for {Theorems} \ref{thm_capacity} and \ref{thm_coded} and {Lemma} \ref{lem_uniform_key}. The cache hit probability and cache update algorithm were not addressed in \cite{DBLP:conf/iccnc/KiskaniS17}. 
The rest of the paper is organized as follows. In section \ref{related_sec}, the related works are described. In section \ref{netmod} the network model,  proposed encoding and  decoding  algorithms and earlier results on coded caching are presented. Section \ref{capacity_sec} focuses on the network capacity and the security aspect of coded caching is studied in section \ref{security_sec}. Cache hit probability for both coded and uncoded caching approaches is studied in section \ref{cache_hit}. In section \ref{cache_update}, an efficient caching update algorithm is proposed. 
Simulation results are provided in section \ref{sim_sec}. 
The paper is concluded in section \ref{conc_sec}. 

\section{Related Work}
\label{related_sec}

Many researchers have investigated the problem of caching in recent years. The fundamental limits of caching over a shared link is studied in \cite{DBLP:journals/tit/Maddah-AliN14}. The authors in \cite{DBLP:journals/tit/Maddah-AliN14} suggested to store uncoded contents or
uncoded segments of the contents in the caches during cache placement phase. During content delivery phase, they proposed to broadcast coded files to the users which resulted in  significant  multicasting gain. Such a gain is achievable by taking advantage of content overlap at  various caches in the network, created by a central coordinating server. This work was later extended in  \cite{DBLP:journals/ton/Maddah-AliN15} to scenarios with decentralized uncoded cache placement and there are many other related works that followed the same set of assumptions. 
While references  \cite{DBLP:journals/tit/Maddah-AliN14,DBLP:journals/ton/Maddah-AliN15} study the problem of caching in broadcast channels,  we  study the problem of caching in wireless ad hoc networks using multihop communications.

The authors in \cite{DBLP:conf/icc/JeonHJC15} studied the problem of caching in multihop networks. They assumed that during the cache placement phase, uncoded contents are stored in the caches independently in a decentralized fashion. They studied the throughput capacity of wireless ad hoc networks. In this paper, we  introduce decentralized coded caching for wireless ad hoc networks and compare our results with uncoded caching results in \cite{DBLP:conf/icc/JeonHJC15}. 

In decentralized uncoded caching scheme that we have considered in this paper, we assumed that different contents are selected uniformly at random and placed in the caches during the cache placement phase. We can therefore model the uncoded caching strategy as a coupon collector problem with group drawings as studied in \cite{stadje1990collector} and \cite{johnson1977urn}. Similar to the classical coupon collector problem, it is proved in \cite{johnson1977urn} that uniform selection of cached contents results in the minimum possible value for the average number of required hops. Hence, in this paper we study the case of uniform cache placement which is the best possible scenario in terms of network capacity.

Our proposed decentralized coded caching scheme utilizes random uniform LT codes \cite{DBLP:conf/focs/Luby02, mackay2005fountain, DBLP:journals/tit/Shokrollahi06} to encode the contents and store them during cache placement phase.

In random binary uniform LT codes \cite{mackay2005fountain}, a random code of length $m$ is chosen from $\mathbb{F}_2^m$ such that any of its  elements is either equal to zero or one. Such a random code can be represented by a random vector of length $m$ in $\mathbb{F}_2^m$. Then contents with indices corresponding to one in the random vector are added together and the encoded files are cached in the nodes.

Physical layer security has attracted many researchers in recent years. A survey of recent progress in this field can be found in \cite{bloch2011physical,DBLP:journals/wc/ShiuCWHC11}. Security aspects of network coding is studied in references like \cite{lima2007random,cai2002secure, vahidian2015relay}. 
In this paper, we will study the security of  LT coding-based caching technique. We specifically investigate the last hop communication security which is the most vulnerable transmission link. We prove that for a certain regime of caching sizes in the network, asymptotic perfect secrecy is achievable for the last hop\footnote{Security investigation for other hops remains as future work.}. Further, a secure caching update algorithm is proposed.   

Our proposed decentralized caching scheme can be used to create a distributed network coding based storage system. The concept of network coding for distributed storage was originally studied in \cite{DBLP:journals/tit/DimakisGWWR10,DBLP:journals/pieee/DimakisRWS11} where files are divided into packets and nodes need to collect all the packets to be able to reconstruct their desired contents. In our proposed distributed caching scheme, the requesting node needs to decode only one content. The authors in \cite{DBLP:journals/tit/DimakisGWWR10, DBLP:journals/pieee/DimakisRWS11} study the repair problem which is the problem of recovering data when some nodes fail. Further, while they address the repair problem using  MDS codes,  our proposed coded caching technique is based on {LT codes \cite{DBLP:conf/focs/Luby02}.}  This paper focuses on the scaling capacity and   security of a network with decentralized coded caching which is not studied in  \cite{DBLP:journals/tit/DimakisGWWR10,DBLP:journals/pieee/DimakisRWS11}. 

LT coding based storage is studied in references like \cite{DBLP:conf/infocom/CaoYYLH12,wang2012lt}. Reference \cite{DBLP:conf/infocom/CaoYYLH12} investigates the repair problem of LT codes in cloud services and \cite{wang2012lt} proposes new types of LT codes for storage systems. Use of fountain codes and Raptor codes \cite{DBLP:journals/tit/Shokrollahi06} has also been studied in  \cite{DBLP:conf/icassp/DimakisPR06,DBLP:journals/jsac/KongAS10}. None of these references have studied capacity, cache hit probabilities, cache update algorithms and security of LT code based storage in wireless ad hoc networks.

\section{Preliminaries}
\label{netmod}
\subsection{Network Model}
We consider a dense wireless ad hoc network in which $n$ nodes are uniformly distributed over a square of unit area as depicted in Figure \ref{fig_model}. These nodes use multihop communications to request one of $m$ contents from a set labeled as  $\mathcal{F}=\{F_1,F_2,\dots,F_m\}$. The requested content is denoted by $F_r$ and the minimum  number of nodes to decode $F_r$ by $N_r$. The minimum number of nodes to decode any requested content is denoted by $N$. We also assume that each node can cache $M$ files of equal size each containing $Q$ bits. In practice, the contents are divided into equal-sized chunks and the chunks are cached. Such an assumption is common in many references including \cite{DBLP:journals/tit/Maddah-AliN14, DBLP:journals/tvt/KiskaniS17, DBLP:journals/ton/Maddah-AliN15}.



\input{fig_model2}

The unit square area in Figure \ref{fig_model} is divided into many square-lets each with a side length of $c_1 s(n)$ where $s(n) = \sqrt{{\log(n)}/{n}}$. It is shown  \cite{DBLP:journals/tit/KulkarniV04} that each square-let contains  $\Theta(\log(n))$ nodes with probability 1. To avoid multiple access interference, a  { Protocol Model} \cite{DBLP:journals/ftnet/XueK06} is considered 
for  successful communication  between nodes. A Time Division Multiple Access (TDMA) scheme is assumed for the transmission between the square-lets. With the assumption of Protocol Model, if each square-let has a side length of $c_1 s(n)$ for a  constant $c_1$, then the square-lets with a distance of $c_2=\frac{2+\Delta}{c_1}$ square-lets apart can transmit simultaneously without significant interference \cite{DBLP:journals/tit/KulkarniV04} for a constant value $\Delta$.

For our proactive routing analysis, we assume that one of the nodes in each square-let is randomly chosen which is called { anchor node}. This node  collects all the information in the  square-let and combines them and relays the coded information to the next hop toward the requesting node. It is known that a minimum transmission range of $\Theta \left(\sqrt{\frac{\log(n)}{n}} \right)$ ensures network connectivity \cite{DBLP:journals/twc/KiskaniAS16} in such a dense network.

For the case of uncoded caching with proactive routing, \cite{DBLP:conf/icc/JeonHJC15} proposes a solution in which groups of neighboring square-lets will cooperate together to form a {\em local group}. It is proved  \cite{DBLP:conf/icc/JeonHJC15} that for a square local group with side length $s_g(n) = \Theta \left(\sqrt{\frac{m}{nM}}\right)$, any requested content is available in at least one node in the local group. A routing protocol \cite{DBLP:journals/tit/KulkarniV04} within the local group connects the source to destination through a series of horizontal and vertical square-lets. The paper assumes that all the nodes inside a local group have a global knowledge of cached contents within each local group. Such an assumption results in significant overhead which requires allocating part of network resources to the exchange of this information. Further, exchange of this information poses significant security threat and allows an eavesdropper to find out which contents are cached in the local group. 

In this paper, we propose a decentralized coded caching approach based on random vectors. In this technique, the contents are randomly and independently combined and stored in  caches. 
When a node requests a content, a unique linear combination of coded files can reconstruct the requested content. Therefore, each node first combines its encoded files and then transmits it to the anchor node (yellow circles in  Figure \ref{fig_model}). 
The anchor node also adds some information from its cache and forwards the newly updated file to the next anchor node closer to the requesting node
as shown in the lower left local group in Figure \ref{fig_model}. 
This process continues until the requesting node receives all the required coded files for decoding as shown in Figure \ref{fig_model2}.
For reactive routing approach, a random direction in the network is selected. Using the cached information in this random direction, the desired content can be obtained as shown in the upper left corner of Figure \ref{fig_model}. Interestingly, we prove selecting a random direction is optimal in terms of number of hops traversed to retrieve the content. In both cases, perfect communication secrecy is achievable. 

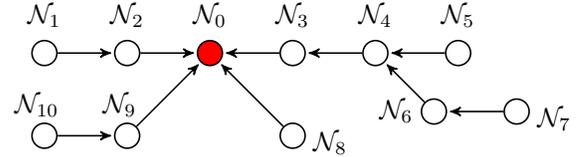
\begin{figure}[http]
 \centering
 \begin{tikzpicture}[->,>=stealth',shorten >=1pt,auto,node distance=1.1cm, semithick] 
  \node[circle,draw]  (N1) [] {};
  \node[circle,draw]  (N2) [right of=N1] {};
  \node[circle,draw,fill=red]  (N3) [right of=N2] {};
  \node[circle,draw]  (N4) [right of=N3] {};
  \node[circle,draw]  (Nl) [right of=N4] {};
  \node[circle,draw]  (Nl2) [right of=Nl] {};
  \node[circle,draw]  (No1) [below right of=Nl] {};
  \node[circle,draw]  (No2) [right of=No1] {};
   \node[circle,draw]  (No3) [below of=N4] {};
   \node[circle,draw]  (No4) [below of=N2] {};
   \node[circle,draw]  (No5) [left of=No4] {};

  \node (T3) [above of = N1, yshift=-0.6cm] {{$\mathcal{N}_{1}$}};
  \node (T4) [above of = N2, yshift=-0.6cm] {{$\mathcal{N}_{2}$}};
  \node (T5) [above of = Nl, yshift=-0.6cm] {{$\mathcal{N}_{4}$}};
  \node (T6) [above of = N3, yshift=-0.6cm] {{$\mathcal{N}_{0}$}};
  \node (T7) [above of = N4, yshift=-0.6cm] {{$\mathcal{N}_{3}$}};
  \node (T8) [above of = Nl2, yshift=-0.6cm] {{$\mathcal{N}_{5}$}};
  \node (T9) [left of = No1, xshift=0.6cm] {{$\mathcal{N}_{6}$}};
  \node (T10) [right of = No2, xshift=-0.6cm, yshift=-0.1cm] {{$\mathcal{N}_{7}$}};
  \node (T11) [right of = No3, xshift=-0.6cm, yshift=-0.1cm]
  {{$\mathcal{N}_{8}$}};
  \node (T11) [right of = No4, xshift=-1.2cm, yshift=0.4cm]
  {{$\mathcal{N}_{9}$}};
  \node (T11) [right of = No5, xshift=-1.2cm, yshift=0.4cm]
  {{$\mathcal{N}_{10}$}};

  \path (N1)   edge node {} (N2);
  \path (N2)   edge node {} (N3);
  \path (N4)  edge node {} (N3);
  \path (Nl)  edge node {} (N4);
  \path (Nl2)  edge node {} (Nl);
  \path (No1)  edge node {} (Nl);
  \path (No2)  edge node {} (No1);
  \path (No3)  edge node {} (N3);
  \path (No4)  edge node {} (N3);
  \path (No5)  edge node {} (No4);
\end{tikzpicture}
\caption{When a node $\mathcal{N}_{0}$ requests a file, it starts gathering the coded files from all the nodes in its local group to construct the requested content.} 
 \label{fig_model2}
 \vspace{-0.2in}
\end{figure}  

\subsection{Decentralized Coded Cache Placement}
In our proposed coded cache placement approach, a randomly encoded file $\mathbf{r}_j^i$ is created and placed in the $j^{th}$ cache location of node $i$. This randomly encoded file is a bit-wise summation of random contents from $\mathcal{F}$. In other words, 
\begin{equation}
   \mathbf{r}_j^i = \sum_{l=1}^{m} a_l^{i,j} F_l = {\bf v}_{j}^i \mathbf{F},
   \label{eq_def_1}
\end{equation}
 where $\mathbf{F} = [F_{1} ~F_{2} ~\dots~ F_{m}]^T$ represents the contents vector and ${\bf v}_{j}^i = [a_1^{i,j} ~ a_2^{i,j} ~ \dots ~ a_m^{i,j}]^T \in \mathbb{F}_2^{m}$ is a random row vector with each element equal to 0 or 1 and the summation is carried over Galois Field GF(2). This process is repeated independently for all cache locations of all nodes\footnote{In practice, each node chooses $M$ linearly independent random vectors during  cache placement phase. However, to simplify the analysis, we drop the independence assumption for different cache locations of a node. Therefore, analytical capacity results found in this paper are pessimistic.}. Notice that based on this construction, each vector $\mathbf{v}_{j}^i$ is uniformly selected from the set of all vectors in $\mathbb{F}_2^m$. Notice that this specific choice of fountain codes is known as Random Linear Fountain (RLF) codes \cite{mackay2005fountain} or Random Uniform LT codes \cite{DBLP:journals/tit/Shokrollahi06}.
  
 \subsection{Content Reconstruction} 
In order to decode any of the contents, nodes need to find $m$ linearly independent vectors ${\bf v}_{j}^i$ to span the m-dimensional space. For each requested content $F_r$, a unique combination of these encoded files will generate $F_r$. In both routing scenarios, the computation of appropriate gains for the combination of coded files is carried by the requesting node and this information is relayed to the neighboring nodes.
 
As depicted in Figure \ref{fig_model2}, node $\mathcal{N}_i$ in the routing path can contribute up to $M$ linearly independent row vectors ${\bf v}_1^i,{\bf v}_2^i,\dots,{\bf v}_M^i$ to span the entire space. Node $\mathcal{N}_i$  applies gain $b_j^i \in \{0,1\}$ to its $j^{th}$ cached file $\mathbf{r}_j^i$ and then adds (in binary field) $\sum_{j=1}^M b_j^i \mathbf{r}_j^i$ to the file it has received from previous hop and passes the newly constructed file to the next hop toward requesting node $\mathcal{N}_0$. After a total of $N_r$ transmissions, the  requesting node receives $ (\sum_{i=1}^{N_r} \sum_{j=1}^M b^i_j {\bf v}^i_j) \mathbf{F}$. It will then applies decoding coefficients to its own cached files and adds it to the received file to reconstruct the desired content. Note that each relay node adds some encoded files to the received file and relays it forward. The coefficients $b_j^i \in \{0,1\}$ are selected such that the linear combination of encoded files produce the desired requested content.  

\subsection{Prior Results}
\label{prior_sec}
The coded caching was originally introduced in \cite{kiskani2016capacity,DBLP:journals/twc/KiskaniS17} for cellular networks. The following lemma was proved in \cite{kiskani2016capacity,DBLP:journals/twc/KiskaniS17}.
 \begin{lem}
 {\em
 Let vector ${\bf v}^i_j \in \mathbb{F}_2^{m}$ {have} equiprobable elements. The average number of vectors ${\bf v}^i_j$ to span the $m$-dimensional space equals to
 \begin{equation}
    \mathbb{E}_{\textrm{uniform}} = m + \sum_{i=1}^{m} \frac{1}{2^{i}-1}  =  m + c_3,
    \label{eq_lem_res1}
 \end{equation}
 where $c_3$ asymptotically approaches the Erdős–Borwein constant 
 ($\approx 1.6067$).
 }\label{lem_uniform}
\end{lem}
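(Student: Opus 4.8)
The plan is to view the accumulation of random vectors as a sequence of rank increments and to sum the expected waiting times between them. I would draw the vectors ${\bf v}^i_j$ one at a time, uniformly and independently from $\mathbb{F}_2^m$, while tracking the dimension of the subspace they span. Let $T$ denote the random number of draws until this dimension first equals $m$, so that $\mathbb{E}_{\textrm{uniform}} = \mathbb{E}[T]$. The natural decomposition is $T = \sum_{k=1}^{m} T_k$, where $T_k$ counts the draws needed to raise the spanned dimension from $k-1$ to $k$. Linearity of expectation then reduces the problem to computing each $\mathbb{E}[T_k]$.

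The step I expect to carry most of the weight is establishing that every $T_k$ is geometric with a success probability depending only on $k$. Suppose the currently spanned subspace $V$ has dimension $k-1$; it then contains exactly $2^{k-1}$ vectors. Because a uniformly distributed vector in $\mathbb{F}_2^m$ falls in any fixed subset with probability proportional to that subset's size, the probability a fresh draw lands inside $V$ equals $2^{k-1}/2^m = 2^{k-1-m}$, \emph{independently of which} $(k-1)$-dimensional subspace $V$ happens to be. This homogeneity is the crux: it guarantees that the escape probability depends on the past only through the current dimension. Hence a new draw strictly enlarges the span with probability $p_k = 1 - 2^{k-1-m}$, and since unsuccessful draws leave $V$ unchanged and the draws are i.i.d., the waiting time to the next success is geometric with parameter $p_k$, giving $\mathbb{E}[T_k] = 1/p_k$.

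Combining these contributions yields
\begin{equation}
\mathbb{E}[T] = \sum_{k=1}^{m} \frac{1}{1 - 2^{k-1-m}} = \sum_{k=1}^{m} \frac{2^{m-k+1}}{2^{m-k+1} - 1}.
\end{equation}
Reindexing with $j = m-k+1$, so that $j$ runs from $1$ to $m$, and writing $\frac{2^j}{2^j - 1} = 1 + \frac{1}{2^j - 1}$, I obtain $\mathbb{E}[T] = m + \sum_{j=1}^{m} \frac{1}{2^j - 1}$, which is exactly \eqref{eq_lem_res1}. Finally, the finite sum converges as $m \to \infty$ to $\sum_{j=1}^{\infty} \frac{1}{2^j - 1} \approx 1.6067$, the Erdős–Borwein constant, which identifies $c_3$ in the limit. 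Once the dimension-only dependence of $p_k$ is justified, the geometric expectations and the telescoping reindexing are routine, so the entire difficulty is concentrated in the uniformity/symmetry argument of the second paragraph.
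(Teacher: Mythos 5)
Your proof is correct: the decomposition into waiting times $T_k$, the key observation that a uniform draw lands in a fixed $(k-1)$-dimensional subspace with probability $2^{k-1-m}$ regardless of which subspace it is (so each $T_k$ is geometric with parameter $1-2^{k-1-m}$), and the reindexing to $m+\sum_{j=1}^{m}\frac{1}{2^j-1}$ are all sound, including the treatment of the zero vector at $k=1$. Note that this paper does not actually prove Lemma \ref{lem_uniform} itself but imports it from the cited earlier works; your argument is the standard (and essentially the only natural) proof of this coupon-collector-type result, so there is nothing to contrast.
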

Lemma \ref{lem_uniform} shows that the average number of cached files to reconstruct any content is equal to $m+c_3$ which is very close to $m$ for large values of $m$. This shows that random uniform vectors  perform close to optimal in terms of minimizing the number of cache locations to retrieve contents. Based on Lemma \ref{lem_uniform}, we have the following corollary.
\begin{corol}{\em 
 If random uniform vectors are used to create encoded files and then these files are independently cached in node caches, then on average 
 $\mathbb{E}[N]=(m+c_3)/M = \Theta({m}/{M})$ nodes are required to decode any requested content.
 }\label{corol_uniform}
\end{corol}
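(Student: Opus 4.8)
The plan is to deduce the corollary directly from Lemma~\ref{lem_uniform} by a short counting argument. Lemma~\ref{lem_uniform} establishes that the expected number of random uniform vectors $\mathbf{v}^i_j \in \mathbb{F}_2^m$ required to span the $m$-dimensional space is $m + c_3$, where $c_3$ approaches the Erd\H{o}s--Borwein constant. The key observation is that the quantity being computed in the lemma is the expected number of \emph{cached files} (equivalently, coefficient vectors) needed, whereas the corollary asks for the expected number of \emph{nodes}. Since each node holds $M$ cached files, and hence contributes $M$ coefficient vectors, the translation from files to nodes is simply a division by $M$.

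First I would recall that, to reconstruct any requested content $F_r$, the requesting node needs the received linear combination $(\sum_{i}\sum_{j} b^i_j \mathbf{v}^i_j)\mathbf{F}$ to produce $F_r$, which (as described in the content reconstruction subsection) requires the collected vectors $\mathbf{v}^i_j$ to span $\mathbb{F}_2^m$. By Lemma~\ref{lem_uniform}, the expected number of such vectors is $\mathbb{E}_{\textrm{uniform}} = m + c_3$. Next I would invoke the modelling assumption stated in the footnote of the cache placement subsection, namely that the vectors associated with the $M$ cache locations of a single node are treated as independent uniform draws; this means that gathering the caches of $k$ nodes supplies exactly $kM$ such vectors. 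Consequently, if $N$ denotes the minimum number of nodes whose combined caches span the space, then the total number of vectors collected is $NM$, and taking expectations gives $M\,\mathbb{E}[N] = \mathbb{E}_{\textrm{uniform}} = m + c_3$, so that $\mathbb{E}[N] = (m+c_3)/M$. The asymptotic statement $\mathbb{E}[N] = \Theta(m/M)$ then follows because $c_3$ is a bounded constant (it converges to $\approx 1.6067$), so $m + c_3 = \Theta(m)$.

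The one point requiring slight care is the passage from ``expected number of files'' to ``number of nodes'' via division by $M$. Strictly, $N$ is the ceiling of the number of required vectors divided by $M$ (a node is consumed in whole), so one should either argue that for the expectation the boundary effect contributes only an $O(1)$ term, or simply appeal to the fact that the analysis is already declared pessimistic and the footnote drops the independence distinction across cache locations precisely so that the file count scales linearly with the node count. I expect this rounding/linearity bookkeeping to be the only mild obstacle; it is absorbed harmlessly into the $\Theta(\cdot)$ notation. Everything else is an immediate consequence of Lemma~\ref{lem_uniform} together with the fact that each node carries $M$ files, so the corollary follows essentially by inspection.
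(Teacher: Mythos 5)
Your proposal is correct and matches the paper's own (implicit) reasoning: the paper states the corollary directly from Lemma~\ref{lem_uniform} by noting that each node contributes $M$ independently drawn uniform vectors, so the expected vector count $m+c_3$ translates to $(m+c_3)/M$ nodes, with the constant $c_3$ and the rounding effect absorbed into $\Theta(m/M)$. Your extra care about the ceiling (which the paper itself only makes explicit later, in the proof of Theorem~\ref{thm_random_optimal}, as $\lceil (m+c_3)/M \rceil$) is a harmless refinement of the same argument, not a different route.
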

\begin{thm}{ \em
 In the proposed coded caching scheme, selecting a random direction is asymptotically optimal in terms of minimizing the number of hops required to retrieve all the contents.
} \label{thm_random_optimal}
\end{thm}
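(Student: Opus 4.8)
The plan is to exploit the spatial homogeneity of the decentralized coded cache placement: by construction every cached file is an independent uniform random vector in $\mathbb{F}_2^m$, so the statistics of the information collected along a routing path are identical in every direction. The proof will compare the number of hops required by random-direction (reactive) routing against an unconstrained lower bound, and show the two agree to leading order. Throughout, ``retrieving all the contents'' is read as collecting enough linearly independent vectors ${\bf v}_j^i$ to span $\mathbb{F}_2^m$, after which every $F_r$ can be reconstructed.

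First I would reduce the hop count to a vector-counting problem. Each traversed square-let contains $\Theta(\log n)$ nodes with probability 1 by \cite{DBLP:journals/tit/KulkarniV04}, and each such node can contribute up to $M$ vectors ${\bf v}_j^i$, so a fixed yield of $\Theta(M\log n)$ coded files becomes available per hop, and crucially this yield is the same in every direction. Consequently the number of hops along any path is proportional to the number of independent vectors that must be gathered to span $\mathbb{F}_2^m$, divided by the common per-hop yield. This step lets the per-hop yield cancel in any ratio of two strategies, so only the number of collected vectors matters.

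Next I would establish the two bounds. For the lower bound, since $\mathbf{F}$ lives in an $m$-dimensional space, any strategy, including one with full network knowledge, needs at least $m$ linearly independent vectors, giving a hop lower bound of $\Theta\!\left(m/(M\log n)\right)$; a genie could attain this by selecting exactly a basis. For the upper bound, a random direction delivers an i.i.d.\ sequence of uniform vectors from $\mathbb{F}_2^m$, and by Lemma \ref{lem_uniform} the expected number needed to span the space is $m+c_3$ with $c_3$ the Erd\H{o}s--Borwein constant. Hence random-direction routing terminates after $\Theta\!\left((m+c_3)/(M\log n)\right)$ hops in expectation, consistent with $\mathbb{E}[N]=(m+c_3)/M$ from Corollary \ref{corol_uniform}. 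The ratio of these two quantities is $(m+c_3)/m \to 1$ as $m\to\infty$, which is exactly asymptotic optimality.

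The main obstacle will be making direction-independence rigorous: I must argue that exchangeability of the uniform coded files across spatial locations allows each direction's collected vectors to be treated as a fresh i.i.d.\ uniform sample, and that the per-hop yield $\Theta(M\log n)$ concentrates uniformly over directions so that no direction is starved of nodes. A secondary technical check is that, in the caching-size regime considered, every direction contains enough square-lets before the network boundary to accumulate the required $m+c_3$ vectors; this follows because the number of square-lets per direction is $\Theta\!\left(\sqrt{n/\log n}\right)$, which dominates the $\Theta\!\left(m/(M\log n)\right)$ hops needed.
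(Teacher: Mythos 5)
Your proposal is correct and follows essentially the same route as the paper's own proof: both invoke Lemma \ref{lem_uniform} to upper-bound the expected number of collected vectors by $m+c_3$ (hence $\lceil (m+c_3)/M\rceil$ nodes in any direction), compare against the dimension lower bound of $m$ vectors (hence $\lceil m/M\rceil$ nodes) that any scheme must meet, and conclude asymptotic optimality from $(m+c_3)/m \to 1$. The additional bookkeeping you introduce --- the $\Theta(M\log n)$ per-hop yield, its uniformity over directions, and the check that a direction does not hit the network boundary before accumulating $m+c_3$ vectors --- cancels in the ratio of the two strategies and is simply left implicit in the paper's shorter argument.
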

\begin{proof}
 In the proposed coded caching scheme, random vectors in $\mathbb{F}_2^m$ are used for encoding the contents. Lemma \ref{lem_uniform} shows that on average $m + c_3$ random vectors are needed to span the $m$-dimensional space of $\mathbb{F}_2^m$. Therefore, with the proposed decentralized coded caching scheme, on average  $\mathbb{E}[N]=\lceil (m+c_3)/M \rceil $ nodes are required to decode any requested content regardless of which routing direction is chosen for content retrieval. 
 
 On the other hand, to be able to retrieve all the contents, at least $m$ cache locations are necessary. This means that at least $\lceil m/M \rceil $ nodes are required for content retrieval in any caching scheme. Since $m+c_3$ is very close to $m$ for large values of $m$, then $ \Theta(\lceil (m+c_3)/M \rceil ) = \Theta (\lceil m/M \rceil) = \Theta(m/M)$, this proves that selecting any random direction is asymptotically optimal in terms of the minimum required number of hops for content retrieval. 
\end{proof}

Notice that Theorem \ref{thm_random_optimal} is intuitively valid since we have used completely random vectors and this means that all of the contents are equally distributed in all directions.

\section{Capacity}
\label{capacity_sec}
This section is dedicated to computation of network throughput capacity of decentralized coded and uncoded caching schemes for proactive and reactive routing techniques. First, we define the precise notions of achievable throughput and network throughput capacity as follows. 


\begin{mydef}{\em 
 A network throughput of $\lambda(n)$ contents per second for each node is {\em achievable} if there is a scheme for scheduling transmissions in the network, such that every content request by each node can be served by a rate of $\lambda(n)$ contents per second.
 }\label{def_achievable}
\end{mydef}

Whether a particular network throughput is achievable depends on the specific cache placement and node locations in the network. Since the location of the nodes and the cache placement in nodes is random, we will define the network capacity as the maximum asymptotic network throughput achievable with probability 1.

\begin{mydef}{\em 
 We say that the {\em throughput capacity} of the network is lower bounded by $\Omega(g(n))$ contents per second if a deterministic constant $c_5 > 0$ exists such that 
 \begin{align}
   \lim_{n \to \infty} ~\mathbb{P}[\lambda(n) = c_5 g(n) ~\textrm{is achievable}~] &= 1. 
 \label{lb_def_cap}
 \end{align}
 We say thet the network throughput capacity is upper bounded by $\operatorname{O}(g(n))$ contents per second if a deterministic constant  $c_6 < + \infty$ exists such that 
 \begin{align}
 \liminf_{n \to \infty} ~\mathbb{P}[\lambda(n) = c_6 g(n) ~\textrm{is achievable}~] &<1.
 \label{ub_def_cap}
 \end{align}
 We say that the network throughput capacity is of order $\Theta(g(n))$ contents per second if it is lower bounded by $\Omega(g(n))$ and upper bounded by $\operatorname{O}(g(n))$.
 }\label{def_cap}
\end{mydef}

In this paper, we study the throughput capacity after the cache placement phase  and during the content delivery phase. 
In the following, we will describe the necessary size of the local group to decode all the contents in proactive routing approach.
\begin{rem}{\em
Corollary \ref{corol_uniform} suggests that for decentralized coded caching on average $\Theta(m/M)$ nodes are needed to decode any desired content. Since any local group in proactive routing on average has $\Theta (n s_g(n)^2)$ nodes, then the average local group side length of $s_g(n) = \Theta \left(\sqrt{\frac{m}{nM}}\right)$ will be enough to decode all the contents. Notice that similar local group side length is found in \cite{DBLP:conf/icc/JeonHJC15} for the case of uncoded caching.
 }\label{rem_local_group_size}
\end{rem}
%
\subsection{Capacity of proactive routing approach}

In this section we assume that any node in each local group is completely aware of all the files cached in its local group. Nodes in local groups are cooperating with each other to transfer a requested content. 
In uncoded caching scenario it has been proved \cite{DBLP:conf/icc/JeonHJC15} that if $M \le m < nM$, a capacity of $\Theta \left( \sqrt{{M}/{m}} \right)$ is achievable. Our proposed decentralized coded caching approach achieves a capacity of $\Theta \left( {{M}/{(m\log(n))}} \right)$ while providing perfect secrecy as will be proved subsequently. 

\begin{thm}{\em
 In decentralized coded caching with a proactive routing scheme the following network throughput capacity is achievable
 \begin{equation}
  \lambda(n) = \Theta \left( \frac{M}{m\log(n)} \right).
  \label{eq_thm}
 \end{equation}
 \label{thm_capacity}
}\end{thm}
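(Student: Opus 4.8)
The plan is to establish the achievable throughput $\lambda(n) = \Theta\left(\frac{M}{m\log(n)}\right)$ by carefully accounting for the spatial reuse afforded by the Protocol Model and TDMA, the number of hops a content must traverse, and the number of nodes that must contribute to each content retrieval. The starting point is Remark~\ref{rem_local_group_size}, which tells us that a local group of side length $s_g(n) = \Theta\left(\sqrt{\frac{m}{nM}}\right)$ suffices to decode any requested content, and Corollary~\ref{corol_uniform}, which tells us that $\Theta(m/M)$ nodes must each contribute a coded combination to reconstruct the file.

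**The key steps.**

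First I would count the number of simultaneous transmissions the network can support. Under the Protocol Model with square-lets of side $c_1 s(n)$ where $s(n) = \sqrt{\log(n)/n}$, square-lets that are $c_2 = \frac{2+\Delta}{c_1}$ apart transmit concurrently, so a constant fraction $\Theta(1/c_2^2)$ of all square-lets are simultaneously active, each carrying one transmission. Since there are $\Theta\left(\frac{1}{(c_1 s(n))^2}\right) = \Theta\left(\frac{n}{\log(n)}\right)$ square-lets in the unit area, the network sustains $\Theta\left(\frac{n}{\log(n)}\right)$ concurrent transmissions per time slot. Second, I would compute the transport burden per content request. By Corollary~\ref{corol_uniform}, each requesting node must aggregate coded files from $\Theta(m/M)$ nodes along the routing path within its local group; because the anchor-node relaying scheme accumulates one coded file per hop as it traverses horizontal and vertical square-lets toward the requester, the number of hops (hence transmissions) expended per content delivery is $\Theta(m/M)$. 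Third, I would invoke a conservation-of-transmissions argument: if all $n$ nodes are simultaneously served at rate $\lambda(n)$, the total transmission demand per unit time is $n \cdot \lambda(n) \cdot \Theta(m/M)$, and this must not exceed the available $\Theta\left(\frac{n}{\log(n)}\right)$ transmission slots. Solving
\begin{equation}
 n \cdot \lambda(n) \cdot \Theta\!\left(\frac{m}{M}\right) = \Theta\!\left(\frac{n}{\log(n)}\right)
 \label{eq_balance}
\end{equation}
immediately yields $\lambda(n) = \Theta\left(\frac{M}{m\log(n)}\right)$, which is the claimed rate. Finally, I would confirm that this throughput is achievable \emph{with probability 1} in the sense of Definition~\ref{def_cap}, by appealing to the high-probability guarantee from \cite{DBLP:journals/tit/KulkarniV04} that each square-let contains $\Theta(\log(n))$ nodes and thus a valid anchor node exists in every square-let along the route.

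**The main obstacle.**

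The subtle step is justifying that the relevant per-request cost is exactly $\Theta(m/M)$ \emph{transmissions} rather than merely $\Theta(m/M)$ contributing nodes. One must argue that the local group, whose side is $s_g(n) = \Theta\left(\sqrt{m/(nM)}\right)$, spans $\Theta(s_g(n)/s(n)) = \Theta\left(\sqrt{\frac{m}{M\log(n)}}\right)$ square-lets in each dimension, so the horizontal-plus-vertical routing path inside it has a number of hops consistent with visiting $\Theta(m/M)$ contributing nodes while paying the $\log(n)$ penalty that arises because each square-let holds $\Theta(\log(n))$ nodes but supports only one transmission per slot. Reconciling the node count $\Theta(m/M)$ with the hop count and the per-square-let bottleneck is where the factor of $\log(n)$ enters the denominator, distinguishing the coded-caching rate $\Theta\left(\frac{M}{m\log(n)}\right)$ from the uncoded rate $\Theta\left(\sqrt{M/m}\right)$ of \cite{DBLP:conf/icc/JeonHJC15}; making this bookkeeping rigorous is the crux of the argument.
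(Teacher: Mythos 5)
Your proposal is correct and takes essentially the same approach as the paper's proof: both combine Corollary~\ref{corol_uniform} and Remark~\ref{rem_local_group_size} to charge $\Theta(m/M)$ transmissions per request, use the Protocol-Model/TDMA spatial reuse to bound the number of concurrent transmissions, and then balance demand against supply, with the $\log(n)$ penalty you flag as the crux being exactly the paper's step that each square-let holds $\Theta(\log(n))$ nodes but carries one transmission per slot. The only difference is that you do the conservation bookkeeping network-wide (supply $\Theta(n/\log(n))$), whereas the paper does it per local group (supply $\Theta\left(\left(s_g(n)/s(n)\right)^2\right)$ against demand $\Theta\left(\log(n)\left(s_g(n)/s(n)\right)^2 m/M\right)$), which is an equivalent accounting.
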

\begin{proof}
Corollary \ref{corol_uniform} shows that on average $\mathbb{E}[N]=\Theta({m}/{M})$ nodes are required to reconstruct all the contents in decentralized coded caching. Remark  \ref{rem_local_group_size} shows that each local group of side length $\Theta \left( \sqrt{\frac{m}{n M}}\right)$ contains this many nodes. As shown in Figure \ref{fig_model}, all these nodes cooperate to transmit the content to the requesting node. Since there are $\Theta(\log(n))$ nodes in each square-let, 
the total number of transmissions for one request in a local group is equal to $\Theta \left(\log(n) \left(\frac{s_g(n)}{s(n)}\right)^2\right)$. Therefore, the total number of file transmissions to satisfy all content requests in each local group has the order of $\Theta \left(\log(n) \left(\frac{s_g(n)}{s(n)}
\right)^2 \frac{m}{M} \right)$. On the other hand, in each local group we can have $\Theta\left(\frac{s_g(n)}{s(n)}\right)^2$ simultaneous transmissions. 
This implies that a network throughput of 
\begin{align}
 \lambda(n) = \Theta \left(\frac{\left(\frac{s_g(n)}{s(n)}\right)^2}{\log(n)
 \left(\frac{s_g(n)}{s(n)}\right)^2 \frac{m}{M}} \right) = 
 \Theta \left( \frac{M}{m\log(n)} \right)
\end{align}
is achievable.
\end{proof}

\subsection{Capacity of reactive routing approach}
Reactive routing usually requires less overhead but incurs higher delays in content delivery. However, one of the advantages of our proposed coded caching is that we can select any random direction and decode the desired content with the same optimum number of nodes. Such a scenario is depicted in the upper left corner of Figure \ref{fig_model}.   

For simplicity of our capacity analysis, we assume that all contents are of equal size with each having $Q$ bits. 
Assume that $\lambda(n)$ is the maximum achievable network throughput. This implies that with a probability close to one, the network can deliver $n \lambda(n)$ contents per second. Therefore, with a probability close to one, network nodes can transmit $n \lambda (n) \mathbb{E}[N] Q$ bits per second. There are exactly $\frac{1}{(c_2 c_1 s(n))^2}$ square-lets at any time slot available for transmission. Hence, the total number of bits that the network is capable of delivering is equal to $\frac{W}{(c_2 c_1 s(n))^2}$ where $W$ is the total available bandwidth. Therefore, 
\begin{equation}
 \lambda(n) = \dfrac{W}{n  \mathbb{E}[N] Q (c_2 c_1 s(n))^2} = \Theta 
 \left(\dfrac{1}{\mathbb{E}[N] \log n} \right),
 \label{eqs_cap_find}
\end{equation}
where $W$ and $Q$ are the total available bandwidth and total number of bits for each content respectively. 
This suggests that to find the maximum achievable network throughput, it is enough to find the average number of transmission hops needed to deliver the contents.


\begin{thm}{\em 
In decentralized coded caching with reactive routing, the network capacity is
 \begin{equation}
  \lambda(n) =   \Theta \left( \frac{M}{m \log(n)} \right).
 \label{cap_no_knowledge}
 \end{equation}
 }\label{thm_coded}
\end{thm}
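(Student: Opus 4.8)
The plan is to reduce the theorem to a single quantity already isolated in the preceding discussion. Equation \eqref{eqs_cap_find} expresses the reactive-routing capacity purely in terms of the average number of contributing nodes, $\lambda(n) = \Theta\!\left(\frac{1}{\mathbb{E}[N]\log n}\right)$, where each of the $\mathbb{E}[N]$ contributing nodes corresponds to one $Q$-bit transmission along the accumulation chain. Given this reduction, the entire proof collapses to pinning down the order of $\mathbb{E}[N]$ for the reactive scheme, i.e. the average number of hops needed to gather enough coded files to decode a requested content when only a randomly chosen direction is available, rather than global knowledge of the cached contents.

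The first substantive point I would make is that the random-direction strategy of reactive routing incurs no asymptotic penalty relative to the globally informed proactive scheme. Because the encoding vectors ${\bf v}_j^i$ are drawn independently and uniformly from $\mathbb{F}_2^m$, the coded information is statistically identical along every direction, so no direction is preferable to any other; this is exactly Theorem \ref{thm_random_optimal}, which states that choosing a random direction is asymptotically optimal in the number of hops. Consequently the requesting node still needs, on average, the same number of contributing nodes as in proactive routing, and Corollary \ref{corol_uniform} (via Lemma \ref{lem_uniform}) supplies that number as $\mathbb{E}[N] = (m+c_3)/M = \Theta(m/M)$. Substituting into the reduction yields
\begin{equation}
 \lambda(n) = \Theta\!\left(\frac{1}{\mathbb{E}[N]\log n}\right) = \Theta\!\left(\frac{1}{(m/M)\log n}\right) = \Theta\!\left(\frac{M}{m\log n}\right),
\end{equation}
which is the claimed capacity.

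Finally, I would be careful that the $\Theta$ statement demands both directions. The converse (upper bound) follows because decoding any content provably requires collecting at least $\Theta(m/M)$ coded files, so the per-content cost $\mathbb{E}[N]\,Q$ appearing in \eqref{eqs_cap_find} cannot be lowered by any scheme; the achievability (lower bound) follows from Theorem \ref{thm_random_optimal}, which guarantees that $\Theta(m/M)$ transmissions along a random direction actually suffice. I do not anticipate a genuine obstacle, since the heavy lifting---the hop-count-to-capacity relation in \eqref{eqs_cap_find} and the optimality of the random direction in Theorem \ref{thm_random_optimal}---is already in place; the remaining work is essentially a substitution, and the only subtlety is the explicit pairing of the converse and achievability bounds to justify the exact $\Theta$ order.
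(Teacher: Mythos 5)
Your proposal is correct and takes essentially the same route as the paper: the paper's own proof is exactly the two-step substitution of $\mathbb{E}[N] = \Theta(m/M)$ (from Lemma \ref{lem_uniform} and Corollary \ref{corol_uniform}) into equation \eqref{eqs_cap_find}. Your additional remarks invoking Theorem \ref{thm_random_optimal} and separating achievability from the converse are consistent elaborations of what \eqref{eqs_cap_find} already encodes, not a different argument.
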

\begin{proof}
Using Lemma \ref{lem_uniform}, on 
average $\mathbb{E}[N] = \Theta(\frac{m}{M})$ nodes (or equivalently hops) are required to decode a content in decentralized coded caching. This along with equation \eqref{eqs_cap_find} proves the theorem.
\end{proof}
In decentralized uncoded caching strategy, nodes cache contents with uniform distribution. Lemma \ref{leme0} computes the average number of hops to retrieve a content.
\begin{lem}
 {\em  If a content is requested independently and uniformly at random from a set of $m$ contents, then the average  number of requests to have at least one copy of each content is equal to $ \mathbb{E}[l] =  m \sum_{i=1}^{m} \frac{1}{i} = m H_{m} = \Theta(m \log(m))$. 
 }
 \label{leme0}
\end{lem}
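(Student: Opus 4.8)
The plan is to recognize Lemma \ref{leme0} as the classical coupon collector problem and carry out the standard stage-decomposition argument. First I would let $l$ denote the total number of requests needed to obtain at least one copy of each of the $m$ contents, and write $l = \sum_{i=1}^{m} T_i$, where $T_i$ is the number of requests made while exactly $i-1$ distinct contents have already been gathered, up to and including the request at which the $i$-th distinct content first appears. This decomposition is valid because the collection process passes monotonically through states indexed by the number of distinct contents seen so far, leaving state $i-1$ precisely when a brand-new content is requested.

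Next I would analyze each $T_i$ individually. When $i-1$ distinct contents have already been collected, a single request yields a previously unseen content with probability $p_i = \frac{m-(i-1)}{m}$, and this holds independently of the past because each request is drawn uniformly and independently over the $m$ contents. Consequently $T_i$ is a geometric random variable with success probability $p_i$, so $\mathbb{E}[T_i] = 1/p_i = \frac{m}{m-i+1}$.

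Then, by linearity of expectation, I would sum these means and reindex the sum by setting $k = m-i+1$, which turns the decreasing sequence of denominators into the harmonic series:
\[
\mathbb{E}[l] = \sum_{i=1}^{m} \mathbb{E}[T_i] = \sum_{i=1}^{m} \frac{m}{m-i+1} = m \sum_{k=1}^{m} \frac{1}{k} = m H_{m}.
\]
Finally, using the standard estimate $H_m = \ln m + \gamma + o(1) = \Theta(\log m)$, I would conclude $\mathbb{E}[l] = \Theta(m \log m)$, as claimed.

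Since every step is elementary, there is no genuine obstacle here; the only point requiring care is justifying that each $T_i$ is geometric with the stated parameter, which follows directly from the independence and uniformity of the request model. In particular, invoking linearity of expectation lets me avoid reasoning about any dependence among the $T_i$, so the expectation computation goes through without further technical machinery.
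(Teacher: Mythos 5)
Your proof is correct. The paper's own ``proof'' of this lemma is a one-line citation to the classical coupon collector problem, and your argument is precisely the canonical stage-decomposition proof of that classical result: writing $l=\sum_{i=1}^{m}T_i$ with each $T_i$ geometric of parameter $\frac{m-i+1}{m}$, applying linearity of expectation, and reindexing to obtain $mH_m=\Theta(m\log m)$. So you have simply made self-contained what the paper delegates to the literature; there is no gap and no substantive difference in approach.
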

\begin{proof}
\textcolor{black}{This is the well-known {\em coupon collector problem} \cite{erdHos1961classical}.}
\end{proof}
\begin{lem}{\em
 If each node stores $M$ different files uniformly at random during cache placement phase, then the average number of nodes required so that each file is cached in at least one node is between $\dfrac{m H_m}{d(m,M)}$ and $1+\dfrac{m H_m}{d(m,M)}$ where 
  \begin{equation}
  d(m,M) = \sum_{j=0}^{M-1} \frac{m}{m-j}.
  \label{eq_lem_batch}
 \end{equation}
 }\label{lem_batch}
\end{lem}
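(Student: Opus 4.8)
The plan is to couple the group-drawing process with a single-coupon stream and then read off the two bounds using Wald's identity, with Lemma~\ref{leme0} as the input. First I would construct an auxiliary i.i.d. sequence $X_1,X_2,\dots$ drawn uniformly \emph{with replacement} from the $m$ files, and cut it into consecutive blocks, closing the $i$-th block the moment it first contains $M$ distinct files. Writing $D_i$ for the length of block $i$ and $\mathcal{S}_i$ for its set of $M$ distinct files, the pairs $(\mathcal{S}_i,D_i)$ are i.i.d. across $i$; by relabeling symmetry $\mathcal{S}_i$ is uniform over the $\binom{m}{M}$ subsets, so $(\mathcal{S}_i)_i$ has \emph{exactly} the law of the caches of successive nodes in the lemma. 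A partial coupon-collector count (the wait to raise the number of distinct files from $j$ to $j+1$ is geometric with mean $\tfrac{m}{m-j}$) yields $\mathbb{E}[D_i]=\sum_{j=0}^{M-1}\tfrac{m}{m-j}=d(m,M)$.

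Next I would identify the two collection times. Since the union $\mathcal{S}_1\cup\dots\cup\mathcal{S}_k$ is precisely the set of distinct files seen in the first $S_k:=D_1+\dots+D_k$ draws, the number of nodes $L$ needed to cover all files and the number of single draws $T$ needed to do so satisfy $S_{L-1}<T\le S_L$. Here $L$ is a stopping time for the block filtration, so Wald's identity gives $\mathbb{E}[S_L]=\mathbb{E}[L]\,d(m,M)$, while Lemma~\ref{leme0} gives $\mathbb{E}[T]=mH_m$. The lower bound is then immediate: $mH_m=\mathbb{E}[T]\le\mathbb{E}[S_L]=\mathbb{E}[L]\,d(m,M)$, hence $\mathbb{E}[L]\ge mH_m/d(m,M)$.

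For the upper bound I would use $S_L=S_{L-1}+D_L<T+D_L$, so that $\mathbb{E}[L]\,d(m,M)=\mathbb{E}[S_L]\le \mathbb{E}[T]+\mathbb{E}[D_L]=mH_m+\mathbb{E}[D_L]$; dividing by $d(m,M)$ yields the claimed bound $1+mH_m/d(m,M)$ \emph{provided} $\mathbb{E}[D_L]=d(m,M)$.

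The main obstacle is exactly this last equality: controlling the length of the \emph{completing} block $D_L$. Naively one might fear an inspection-paradox inflation making $D_L$ stochastically larger than a typical block, which would spoil the constant. I would rule this out by observing that within a single block the distinct set $\mathcal{S}_i$ is independent of the block length $D_i$ (both are functions of the draws, but by relabeling symmetry the identity of the $M$ distinct files carries no information about how many repeats occurred before they were all seen). Combining this with independence across blocks gives $(\mathcal{S}_i)_i \perp (D_i)_i$, so the completing index $L$, being a function of $(\mathcal{S}_i)_i$ alone, is independent of the whole sequence $(D_i)_i$; therefore $\mathbb{E}[D_L]=\sum_i\mathbb{E}[D_i]\,\mathbb{P}[L=i]=d(m,M)$. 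Establishing this set/length independence cleanly is the crux; everything else is routine bookkeeping with Wald's identity and Lemma~\ref{leme0}.
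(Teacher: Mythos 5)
Your proposal is correct, and it is built on exactly the same coupling as the paper's own proof: the paper likewise generates the node caches by running a single uniform-with-replacement draw stream, closing a node's cache as soon as $M$ distinct files have appeared (so that filling one cache costs $d(m,M)$ draws on average, by summing the geometric waiting times), and then invokes Lemma~\ref{leme0} to say that all $m$ files have been seen after $mH_m$ draws on average. The difference is one of rigor rather than of route. The paper ends with a single ``hence,'' implicitly dividing the expected number of draws by the expected block length to get the expected number of blocks, without justifying why a random stopping index can be exchanged with these expectations. That unjustified step is precisely what you isolate and close: the sandwich $S_{L-1} < T \le S_L$, Wald's identity $\mathbb{E}[S_L] = \mathbb{E}[L]\, d(m,M)$ (valid since the blocks are i.i.d.\ and the event $\{L \ge i\}$ is determined by the first $i-1$ blocks), and the relabeling-symmetry argument showing each block's distinct-file set is independent of its length, so that $\mathbb{E}[D_L] = d(m,M)$ and no inspection-paradox inflation can spoil the additive constant $1$ in the upper bound. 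Your independence argument is sound: conditioning on $D_i = d$, the law of $\mathcal{S}_i$ is invariant under permutations of the file labels and hence uniform over $M$-subsets for every $d$, which gives $\mathcal{S}_i \perp D_i$, and the i.i.d.\ block structure then makes $L$, a function of $(\mathcal{S}_i)_i$ alone, independent of $(D_i)_i$. In short, you carried out completely the argument the paper only sketches; what your version buys is an actual proof of the upper bound, which in the paper rests on an assertion.
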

\begin{proof}
 This problem is an extension of the coupon collector problem in Lemma \ref{leme0} as the files in each node's cache are different. To find the average number of nodes to have one copy of each file in at least one node, we start from a classic coupon collector problem. Assume that files are chosen uniformly at random and as soon as $M$ different files are found, they are placed in a node's cache. Then the same process is  started over for the next node and after finding $M$ distinct files, the files are placed in its cache. Assume that this process is repeated until  one copy of each file is cached in at least one node. This is a geometric distribution problem, then on average $d(m,M)$ files are required to fill up one node's cache. Based on Lemma \ref{leme0},   we will have a copy of each file in at least one node's cache after an average $m H_m$ file requests. Hence, the average number of nodes required to have one copy of each file in at least one node is between $\dfrac{m H_m}{d(m,M)}$ and $1+\dfrac{m H_m}{d(m,M)}$. 
\end{proof}

\begin{thm}{\em 
 If $m>>M$, the capacity of the network using decentralized uncoded caching is equal to  
  \begin{equation}
  \lambda = \Theta \left(\frac{M}{m \log(m) \log(n)} \right).
  \label{ex_uncoded}
 \end{equation}
  \label{thm_uncoded}
}\end{thm}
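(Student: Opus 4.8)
The plan is to reduce the uncoded-caching capacity to the same hop-counting framework already established for the coded case, and then substitute the coupon-collector hop count from Lemma \ref{lem_batch}. Equation \eqref{eqs_cap_find} was derived without any reference to the coding scheme: its only inputs are $\mathbb{E}[N]$, the average number of transmission hops required to serve a request, and the generic bandwidth/square-let spatial-reuse accounting. I would therefore first observe that \eqref{eqs_cap_find}, namely $\lambda(n) = \Theta\!\left(1/(\mathbb{E}[N]\log n)\right)$, holds verbatim for decentralized uncoded caching, so that the entire problem collapses to evaluating $\mathbb{E}[N]$ for the uncoded scheme.

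Next I would identify $\mathbb{E}[N]$ with the quantity computed in Lemma \ref{lem_batch}. In reactive routing the search direction is chosen blindly, so to guarantee that an arbitrary requested file is encountered along the explored path, the traversed nodes must collectively hold at least one copy of every one of the $m$ files; this is exactly the batched coupon-collector count of Lemma \ref{lem_batch}. Hence $\mathbb{E}[N]$ lies between $mH_m/d(m,M)$ and $1 + mH_m/d(m,M)$, with $d(m,M)$ given in \eqref{eq_lem_batch}.

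The remaining work is purely asymptotic simplification under the hypothesis $m \gg M$. For the denominator, I would bound $d(m,M) = \sum_{j=0}^{M-1} m/(m-j)$ termwise: each summand satisfies $1 \le m/(m-j) \le m/(m-M+1)$, and since $m \gg M$ forces $m/(m-M+1) \to 1$, this yields $d(m,M) = M(1+o(1)) = \Theta(M)$. For the numerator, $mH_m = \Theta(m\log m)$ by the harmonic-sum estimate already invoked in Lemma \ref{leme0}. Combining these gives $\mathbb{E}[N] = \Theta\!\left(m\log(m)/M\right)$, and substituting into \eqref{eqs_cap_find} produces $\lambda = \Theta\!\left(M/(m\log(m)\log(n))\right)$, as claimed.

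The conceptual crux, and the step most open to objection, is the identification of $\mathbb{E}[N]$ with the full coupon-collector cover count rather than with the $\Theta(m/M)$ geometric waiting time to hit one particular file. The justification rests on the blindness of reactive routing: because the path direction is committed in advance and cannot be steered toward the (unknown) location of the requested file, correctness for \emph{every} possible request forces the path to cover all $m$ files, which is precisely where the extra $\log m$ factor, absent in the coded scheme of Theorem \ref{thm_coded}, enters. I would make this covering requirement explicit and contrast it with the coded case, in which any $\Theta(m/M)$ nodes already supply enough linearly independent vectors to decode, to make clear why uncoded caching must pay the coupon-collector penalty.
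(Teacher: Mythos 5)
Your proposal is correct and follows essentially the same route as the paper's proof: both reduce to the scheme-agnostic capacity formula \eqref{eqs_cap_find}, identify $\mathbb{E}[N]$ with the batched coupon-collector bound of Lemma \ref{lem_batch}, and simplify $d(m,M)=\Theta(M)$ under $m \gg M$ to obtain $\mathbb{E}[N]=\Theta(m\log(m)/M)$. Your explicit justification of why reactive routing forces the full cover count (rather than a single-file waiting time) is a point the paper leaves implicit, but the argument is otherwise identical.
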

\begin{proof}
 Lemma \ref{lem_batch} shows that in case of uncoded caching the average number of nodes needed to satisfy all requests 
 is upper and lower bounded as 
 \begin{align}
\dfrac{m H_m}{d(m,M)} \le \mathbb{E}[N] \le 1+\dfrac{m H_m}{d(m,M)}.  
 \end{align}
 Therefore, for large values of $m$, the average number of nodes  for decoding scales as 
 \begin{align}
 \mathbb{E}[N] =  \Theta \left(\dfrac{m H_m}{d(m,M)} \right).
 \label{eqs_thm_1_2}
 \end{align}
 To find a bound for $d(m,M)$, notice that the series in the right hand side of equation \eqref{eq_lem_batch} has $M$ terms and the maximum term $\frac{m}{m-M+1}$ corresponds to the case when $j=M-1$ and the minimum term 1 corresponds to the case when $j=0$. A lower bound and an upper bound on $d(m,M)$ can be found by using the minimum and maximum terms respectively. Therefore,
 \begin{equation}
  M \le d(m,M) \le \frac{M m}{m-M+1}.
  \label{eq_lab}
 \end{equation}
 When $m>>M$, then the lower and upper bounds of equation \eqref{eq_lab} converge to the same value of $M$.
  \begin{align}
  \mathbb{E}[N] &= \Theta \left(\frac{m \log(m) }{M } \right) 
  \label{eqs_lb_thm1_2}
  \end{align}
 Combining  \eqref{eqs_lb_thm1_2} and \eqref{eqs_cap_find} proves the theorem. 
 \end{proof}
\begin{rem}{\em
Theorems \ref{thm_coded} and \ref{thm_uncoded} show that  decentralized coded caching strategy can improve the network capacity by a factor of $\log(m)$ in reactive routing.  
}\label{rem_cap_improve}
\end{rem}

Figure \ref{fig_theory} compares the capacity of coded caching with uncoded caching for both proactive and reactive routing algorithms. To plot this figure, we assumed that the number of contents $m$, grows polynomially with the number of nodes $n$. 
Coded caching provides perfect secrecy as will be discussed later while it performs better (worse) than uncoded caching for reactive (proactive) routing algorithm. 
\begin{figure}[http]
    \center
      \includegraphics[scale=0.5,angle=0]{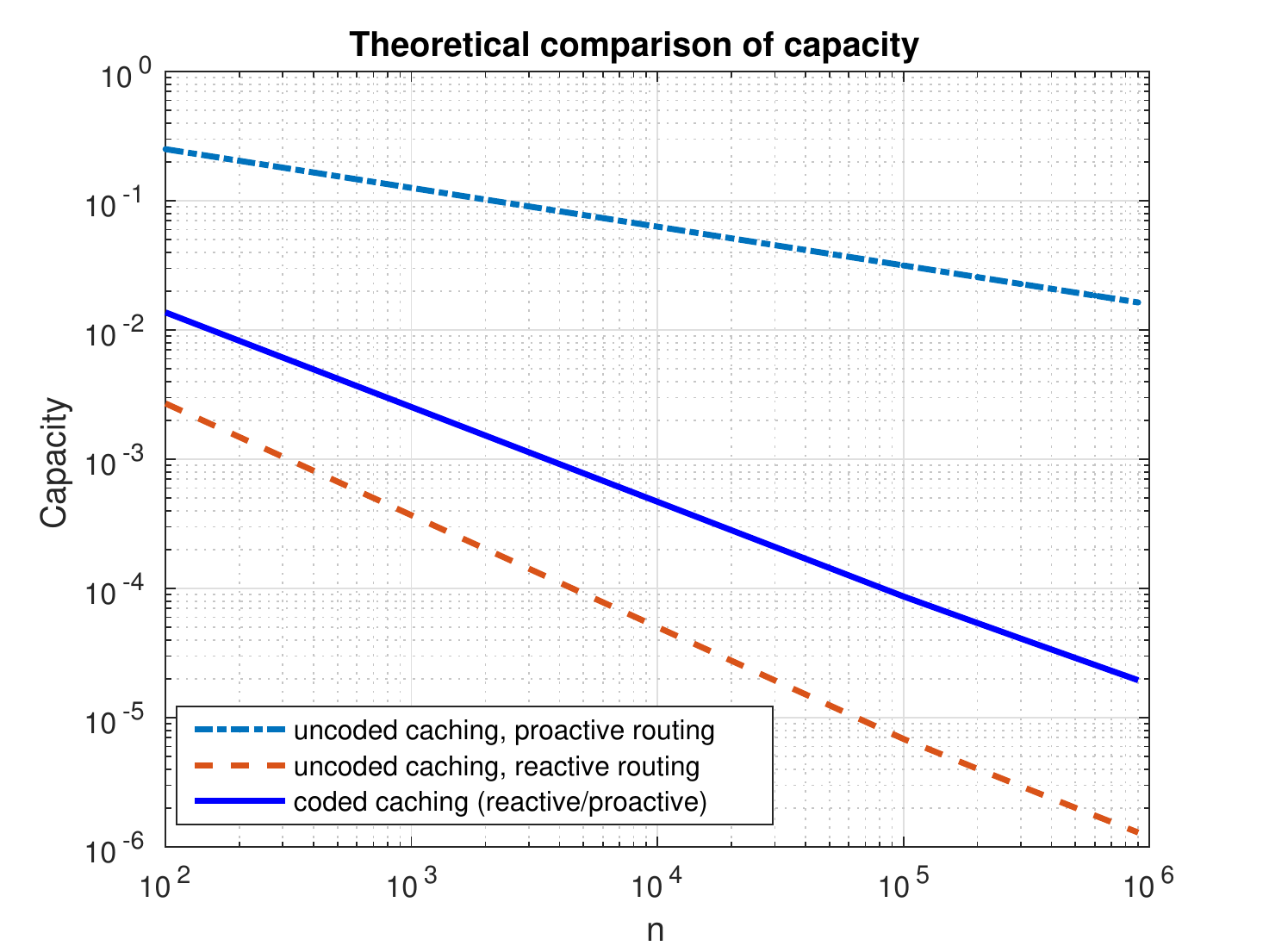}
\caption{Capacity for coded and uncoded caching using proactive and reactive routing algorithms.} 
\label{fig_theory}
\end{figure}

\section{Security} 
\label{security_sec}
 This section evaluates the security of  coded caching strategy. Note that uncoded caching allows an adversary with a noiseless wiretap channel to perfectly receive the transmitted files. We prove that such an eavesdropper will not be able to reduce its   equivocation about the transmitted files in coded caching approach when there is a large number of files. Therefore, asymptotic perfect secrecy can be achieved. This problem was originally studied by Shannon \cite{shannon1949communication}.
 
Our secrecy proof is applicable to both proactive and reactive routing schemes. As described earlier, each node combines  its encoded files as $ \mathbf{x}^i = \sum_{j=1}^M b_j^i \mathbf{r}_j^i = \sum_{j=1}^M b_j^i \mathbf{v}_j^i \mathbf{F} $ and adds it to the previously received file and forwards it to the next hop. Therefore, the aggregate received file by the requesting node $\mathcal{N}_0$ is 
$S_r = \sum_{i=1}^{N_r} \mathbf{x}^i = \sum_{i=1}^{N_r} \sum_{j=1}^M b_j^i \mathbf{r}_j^i.$
We assume that the encoding vectors $\mathbf{v}_j^i$ of the neighboring nodes are transmitted to the requesting node $\mathcal{N}_0$ through a secure low bandwidth channel. When enough number of such vectors are gathered, $\mathcal{N}_0$ computes the decoding coefficients $b_j^i$ in order to generate the desired file $F_r$. Then, it sends back the $b_j^i$ coefficients through the low bandwidth secure channel to its neighboring nodes. The secure channel is used only to transmit the encoding and decoding information. Transmitting the large encoded files through secure channel would be undesirable due to low bandwidth constraint. However, the encoding vectors $\mathbf{v}_j^i$ and the decoding gains $b_j^i$ have much smaller sizes for transmission through secure channel.

If content $F_r$ is used in the encoding of at least one of the coded cached files in $\mathcal{N}_0$ (i.e. $a_r^{0,j}=1$ for some $1 \le j \le M$), then the requesting node $\mathcal{N}_0$ can generate the file 
\begin{align}
 \mathbf{x}^0_r = \sum_{j=1}^M b_j^0 \mathbf{r}_j^0 = \sum_{j=1}^M b_j^0 \mathbf{v}_j^0 \mathbf{F} = F_r + \sum_{j=1}^M b_j^{'0} \mathbf{v}_j^0 \mathbf{F}
\end{align}
from its own encoded cached files. Note that $\mathbf{v}_{\textrm{req}}=\sum_{j=1}^M b_j^{'0} \mathbf{v}_j^0$ is a coding vector in $m$-dimensional space. Node $\mathcal{N}_0$ only needs to receive $S_r = \mathbf{v}_{\textrm{req}}  \mathbf{F}$ and add  $S_r$ to $\mathbf{x}^0_r$ in GF(2) to retrieve $F_r$. The requesting node $\mathcal{N}_0$ uses the secure channel to collect enough number of encoding vectors in order to span $\mathbf{v}_{\textrm{req}}$. Then the requesting node $\mathcal{N}_0$ finds the appropriate decoding coefficients $b_j^i$ to span $\mathbf{v}_{\textrm{req}}$ and  sends these decoding coefficients back to the neighboring nodes through the secure channel. The neighboring nodes collaboratively create the right decoding file $S_r$ and transmit it to $\mathcal{N}_0$. 

The second possibility which is less likely to happen for large values of $M$ is that none of the encoded files in $\mathcal{N}_0$ contains $F_r$ (i.e. $a_r^{0,j}=0$ for all $1 \le j \le M$). In that case, $\mathcal{N}_0$ generates a unique combination of its encoded files as $\mathbf{x}^0_r=\sum_{j=1}^M b_j^0 \mathbf{v}_j^0 \mathbf{F}$. In order to decode $F_r$, node $\mathcal{N}_0$ needs to receive $S_r = F_r+\mathbf{x}_r^0$. Hence, it uses the secure channel to collect enough number of encoding vectors $\mathbf{v}_j^i$ to be able to construct $S_r$.
After solving the linear equation in GF(2), it  sends the decoding gains back to the neighboring nodes such that they can collaboratively create the encoded file $S_r = F_r+\mathbf{x}_r^0$ which allows $\mathcal{N}_0$ to retrieve $F_r$.

We claim asymptotic perfect secrecy for this approach is achievable as long as for each requested file $F_r$, the requesting node generates a different encoded combination $\mathbf{x}^0_r$ which acts similar to key and hence, the transmitted signal is in fact an encrypted version of the message $F_r$. The intended receiver is indeed capable of decrypting the message by adding its secret key $\mathbf{x}^0_r$ to it. Notice that for each requested file $F_r$, a different key $\mathbf{x}^0_r$ is generated using the encoded cached files in the requesting nodes. No other eavesdropper will be able to decode the message as they do not have the secret key $\mathbf{x}^0_r$. {The main advantage of the legitimate receiver is the information stored in its cache which allows it to create a unique key $\mathbf{x}^0_r$ for each requested file $F_r$. This is similar to Shannon cipher problem \cite{shannon1949communication}. 

In \cite{shannon1949communication}, Shannon introduced the Shannon cipher system in which an encoding function $\mathfrak{e}: \mathbb{M} \times \mathbb{K} \to 
\mathbb{C}$ is mapping a message $\mathfrak{M} \in \mathbb{M}$ and a key $\mathfrak{K} \in \mathbb{K}$ to a codeword $\mathfrak{C} \in \mathbb{C}$. In our problem, for each requested {content} by a user, the legitimate receiver uses a unique key $\mathfrak{K}$ to recover the message $\mathfrak{M}$. Even when a different user requests the same file, it uses a different key because each user caches different encoded files. 
The unique key for each user depends on the coded files that the node is storing and the coded files from neighboring nodes used for decoding the requested file. Shannon proved that if a coding scheme for Shannon's cipher system achieves  perfect secrecy, then $\mathbb{H}(\mathfrak{K}) \ge \mathbb{H}(\mathfrak{M})$ where $\mathbb{H}(.)$ denotes the entropy. He proved that at least one secret key bit should be used for each message bit to achieve perfect secrecy. If the sizes of messages, keys and codewords are the same, there are necessary and sufficient conditions  \cite{bloch2011physical} to obtain perfect  secrecy presented in the  following theorem. 
\begin{thm}{\em 
 If $|\mathbb{M}|=|\mathbb{K}|=|\mathbb{C}|$, a coding scheme achieves perfect secrecy if and only if 
 \begin{itemize}
  \item For each pair $(\mathfrak{M}, \mathfrak{C}) 
  \in (\mathbb{M} \times \mathbb{C})$, there exists a unique key 
  $\mathfrak{K} \in \mathbb{K}$ such that $\mathfrak{C} = 
  \mathfrak{e}(\mathfrak{M},\mathfrak{K})$.
  \item The key $\mathfrak{K}$ is uniformly distributed in $\mathbb{K}$.
 \end{itemize}
 }\label{thm_shannon}
\end{thm}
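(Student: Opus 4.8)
The plan is to prove both implications of the equivalence, taking the definition of perfect secrecy to be statistical independence of message and ciphertext, namely $\mathbb{P}(\mathfrak{C}=\mathfrak{c}\mid\mathfrak{M}=\mathfrak{m})=\mathbb{P}(\mathfrak{C}=\mathfrak{c})$ for every admissible pair $(\mathfrak{m},\mathfrak{c})$, equivalently $\mathbb{H}(\mathfrak{M}\mid\mathfrak{C})=\mathbb{H}(\mathfrak{M})$. Throughout I would invoke the two standing assumptions of a Shannon cipher: the key $\mathfrak{K}$ is drawn independently of the message $\mathfrak{M}$, and for each fixed key the map $\mathfrak{m}\mapsto\mathfrak{e}(\mathfrak{m},\mathfrak{K})$ is injective, so that the legitimate receiver can decrypt. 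I would also assume every message and every ciphertext occurs with positive probability, which is harmless since symbols of zero probability can be deleted from $\mathbb{M}$ and $\mathbb{C}$ without altering the cipher.

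For the sufficiency direction ($\Leftarrow$), I would start from the two hypotheses. Fixing $(\mathfrak{m},\mathfrak{c})$, let $\mathfrak{k}_{\mathfrak{m},\mathfrak{c}}$ denote the unique key with $\mathfrak{e}(\mathfrak{m},\mathfrak{k}_{\mathfrak{m},\mathfrak{c}})=\mathfrak{c}$ guaranteed by the first bullet. Independence of key and message gives $\mathbb{P}(\mathfrak{C}=\mathfrak{c}\mid\mathfrak{M}=\mathfrak{m})=\mathbb{P}(\mathfrak{K}=\mathfrak{k}_{\mathfrak{m},\mathfrak{c}})$, which equals $1/|\mathbb{K}|$ by the second bullet. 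Since this value does not depend on $\mathfrak{m}$, the conditional ciphertext distribution is identical for all messages, and averaging over $\mathfrak{m}$ shows the marginal $\mathbb{P}(\mathfrak{C}=\mathfrak{c})$ equals the same constant $1/|\mathbb{K}|$; hence conditional equals marginal and perfect secrecy follows. This direction is essentially a counting identity and presents no real difficulty.

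For the necessity direction ($\Rightarrow$), I would exploit the cardinality constraint $|\mathbb{M}|=|\mathbb{K}|=|\mathbb{C}|$ twice to upgrade surjectivity or injectivity to bijectivity. First, perfect secrecy together with positivity gives $\mathbb{P}(\mathfrak{C}=\mathfrak{c}\mid\mathfrak{M}=\mathfrak{m})=\mathbb{P}(\mathfrak{C}=\mathfrak{c})>0$ for all $(\mathfrak{m},\mathfrak{c})$, so for each fixed $\mathfrak{m}$ every ciphertext is reachable, i.e.\ $\mathfrak{k}\mapsto\mathfrak{e}(\mathfrak{m},\mathfrak{k})$ is surjective onto $\mathbb{C}$; as $|\mathbb{K}|=|\mathbb{C}|$ it is then a bijection, which yields existence and uniqueness of the key and proves the first bullet. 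Second, to obtain uniformity, I would fix a ciphertext $\mathfrak{c}$ and study the assignment $\mathfrak{m}\mapsto\mathfrak{k}_{\mathfrak{m},\mathfrak{c}}$: decryptability makes it injective, and $|\mathbb{M}|=|\mathbb{K}|$ promotes it to a bijection onto $\mathbb{K}$. Combining this bijection with the relation $\mathbb{P}(\mathfrak{K}=\mathfrak{k}_{\mathfrak{m},\mathfrak{c}})=\mathbb{P}(\mathfrak{C}=\mathfrak{c})$, whose right-hand side is a constant independent of $\mathfrak{m}$, forces every key probability to equal that single constant, so $\mathfrak{K}$ is uniform over $\mathbb{K}$.

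The main obstacle is bookkeeping rather than depth: one must keep track of which map the finiteness argument is applied to, namely keys-to-ciphertexts for a \emph{fixed message} in the first step versus messages-to-keys for a \emph{fixed ciphertext} in the second, and must invoke the standing independence and injectivity assumptions explicitly. It is precisely the equality of the three cardinalities that closes the gap between ``surjective'' or ``injective'' and ``bijective,'' and once both bijections are established the uniformity of the key propagates immediately.
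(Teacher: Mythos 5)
Your proposal is correct, but there is nothing in the paper to compare it against line by line: the paper's ``proof'' of this theorem is a single sentence deferring to Section 3.1 of the Bloch--Barros book on physical-layer security. What you have written is, in effect, the content of that cited proof: the standard Shannon-type argument in which perfect secrecy is taken as independence of $\mathfrak{M}$ and $\mathfrak{C}$, sufficiency follows from the counting identity $\mathbb{P}(\mathfrak{C}=\mathfrak{c}\mid\mathfrak{M}=\mathfrak{m})=\mathbb{P}(\mathfrak{K}=\mathfrak{k}_{\mathfrak{m},\mathfrak{c}})=1/|\mathbb{K}|$, and necessity follows by using the cardinality hypothesis twice to upgrade the surjection $\mathfrak{k}\mapsto\mathfrak{e}(\mathfrak{m},\mathfrak{k})$ (fixed message) and the injection $\mathfrak{m}\mapsto\mathfrak{k}_{\mathfrak{m},\mathfrak{c}}$ (fixed ciphertext) to bijections, which forces existence/uniqueness of keys and then uniformity. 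Your handling of the standing assumptions is the right one and is exactly where a careless write-up would go wrong: independence of $\mathfrak{K}$ and $\mathfrak{M}$ is needed to identify conditional ciphertext probabilities with key probabilities, decryptability (injectivity of $\mathfrak{e}(\cdot,\mathfrak{k})$ for each fixed key) is what makes the message-to-key map injective, and stripping out zero-probability messages and ciphertexts is what lets you convert $\mathbb{P}(\mathfrak{C}=\mathfrak{c}\mid\mathfrak{M}=\mathfrak{m})>0$ into reachability of every ciphertext. So relative to the paper, your proof is strictly more informative: it makes self-contained what the authors outsource to a reference, and it does so by the same route that reference takes.
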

\begin{proof}
 The proof can be found in section 3.1 of \cite{bloch2011physical}.  
\end{proof}
We will use Theorem \ref{thm_shannon} to prove that our approach can achieve asymptotic perfect secrecy. In either of the cases, the requesting node $\mathcal{N}_0$ receives a codeword\footnote{This is true for both scenarios because all the operations are in GF(2).} $S_r = F_r + \mathbf{x}_r^0$ from the last node adjacent to  $\mathcal{N}_0$. Node $\mathcal{N}_0$ uses XOR operation to decode $F_r$ from $S_r$ using it's secret key $\mathbf{x}^0_r$. Therefore, we have a Shannon cipher system in which $\mathfrak{M}=F_r, \mathfrak{K}=\mathbf{x}^0_r, \mathfrak{C} = S_r$ and $\mathfrak{e}$ denotes the XOR operation.  
To use this theorem, first we prove that for large enough values of $m$, the key $\mathbf{x}_r^0$ is uniformly distributed.
\begin{lem}{\em
The asymptotic distribution of bits of coded files in caches tend to uniform.
 }\label{lem_uniform_key}
\end{lem}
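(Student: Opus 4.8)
The plan is to reduce the claim to a single-bit computation and then bound the bias of that bit by a parity (Fourier-over-$\mathbb{F}_2$) identity. Fix a bit position $p \in \{1,\dots,Q\}$ and write the $p$-th bit of a cached coded file $\mathbf{r}_j^i = \mathbf{v}_j^i\mathbf{F}$ as $Z_p = \bigoplus_{l=1}^m a_l\,F_l[p]$, where $F_l[p]\in\{0,1\}$ is the $p$-th bit of content $F_l$, the sum is taken over $\mathrm{GF}(2)$, and $(a_1,\dots,a_m)=\mathbf{v}_j^i$ has i.i.d.\ $\mathrm{Bernoulli}(1/2)$ entries — this is precisely the random-uniform model of \eqref{eq_def_1}. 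Showing that $Z_p$ is uniform is equivalent to showing that the bias $\delta_p := \mathbb{P}[Z_p=0]-\mathbb{P}[Z_p=1]=\mathbb{E}\bigl[(-1)^{Z_p}\bigr]$ vanishes.

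First I would exploit the independence of the $a_l$ to factor the bias into per-content terms,
\begin{equation}
\delta_p = \mathbb{E}\Bigl[(-1)^{\sum_{l} a_l F_l[p]}\Bigr] = \prod_{l=1}^m \mathbb{E}\bigl[(-1)^{a_l F_l[p]}\bigr],
\end{equation}
and evaluate each factor: it equals $1$ if $F_l[p]=0$ and $\tfrac12(+1)+\tfrac12(-1)=0$ if $F_l[p]=1$. Hence $\delta_p=0$ and $Z_p$ is \emph{exactly} uniform the moment a single content carries a $1$ in position $p$; that is, whenever the $p$-th column of the content bit-matrix is not identically zero. This is already a strong, non-asymptotic statement covering every bit whose column is nonzero.

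The only remaining case is a degenerate all-zero column, where $Z_p\equiv 0$; this is what forces the qualifier ``asymptotic.'' I would handle it in the customary way by treating the content bits as generic incompressible data, i.e.\ modeling each $F_l[p]$ as an independent fair bit. The per-content factor then becomes $\mathbb{E}[(-1)^{a_l F_l[p]}]=\tfrac34-\tfrac14=\tfrac12$, so $|\delta_p|=2^{-m}$ and $\mathbb{P}[Z_p=1]\to\tfrac12$ exponentially fast as $m\to\infty$. Equivalently, with no distributional assumption at all, a fixed column is all-zero with probability at most $2^{-m}$, so the fraction of non-uniform bit positions is negligible for a large library.

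Finally I would transfer the conclusion to the object actually needed for Theorem \ref{thm_shannon}, namely the key $\mathbf{x}_r^0=\sum_{j=1}^M b_j^0\mathbf{v}_j^0\mathbf{F}$: its coefficient vector $\mathbf{c}=\sum_{j=1}^M b_j^0\mathbf{v}_j^0\in\mathbb{F}_2^m$ is again built from the independent uniform cache vectors, so as long as the combination is nontrivial $\mathbf{c}$ is itself uniform over $\mathbb{F}_2^m$ and the identical single-bit computation applies to each bit of the key. I expect the genuine difficulty to be exactly the all-zero-column bookkeeping: the conditional per-bit calculation is exact and elementary, and all the care is in pinning down the genericity assumption on the contents so that the $2^{-m}$ bound — rather than a worst-case degenerate library — governs the limit. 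A secondary point to verify is that the dependence between the chosen coefficients $b_j^0$ and the random vectors $\mathbf{v}_j^0$ does not spoil the uniformity of $\mathbf{c}$ when passing from cache files to the key.
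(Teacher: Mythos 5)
Your proof is correct, and at its core it performs the same computation as the paper's, but it is organized differently and yields a sharper intermediate statement. The paper models the content bits themselves as independent Bernoulli random variables with $\mathrm{Pr}(f_l^k=1)=p_l^k$ and computes the parity of $\sum_l a_l^{i,j}f_l^k$ through probability generating functions, arriving at $\mathrm{Pr}[r^k_{i,j}=0]=\frac{1}{2}\bigl(1+\prod_{l=1}^m(1-p_l^k)\bigr)$; evaluating the generating function at $z=\pm 1$ is precisely your Fourier identity, and $\prod_{l=1}^m(1-p_l^k)$ is exactly your conditional bias averaged over the content distribution. What you do differently is to condition on the library: holding the contents fixed and using only the randomness of the coding coefficients $a_l$, each factor of the bias is $0$ or $1$, so the coded bit is \emph{exactly} (not merely asymptotically) uniform at every position where at least one content carries a $1$; the limit $m\to\infty$ is needed only to dismiss all-zero columns under a genericity assumption. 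This separation buys two things. First, it makes explicit where the word ``asymptotic'' really comes from: the paper's final step, $\prod_{l=1}^m(1-p_l^k)\le(1-\inf_l\{p_l^k\})^m\to 0$, silently assumes $\inf_l p_l^k>0$, which is your non-degeneracy condition in probabilistic clothing. Second, your closing caveat is well placed: the lemma as stated concerns only the cached files $\mathbf{r}_j^i$, and the step from there to uniformity of the key $\mathbf{x}^0_r=\sum_{j=1}^M b_j^0\mathbf{v}_j^0\mathbf{F}$ in Theorem \ref{thm_secrecy} is nontrivial precisely because the gains $b_j^0$ are chosen as a function of the realized vectors $\mathbf{v}_j^0$; the paper glosses over this dependence, so your proof of the stated lemma is complete, and the issue you flag lives downstream of it.
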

\begin{proof}
We assume that all files have $Q$ bits and they may have a distribution different from uniform. We will prove that each coded cache file will be uniformly distributed for large values of $m$. Let us denote the $k^{th}$ bit of file $F_l$ by $f_l^k$
 where $1 \le k \le Q$ and $1 \le l \le m$. Assume that 
$\mathrm{Pr}(f_l^k = 1) = p_l^k = 1 - \mathrm{Pr}(f_l^k = 0)$. 
Further, we assume that the bits of files ($f_l^k$) are independent. The $k^{th}$ bit of the coded file in the $j^{th}$ cache location of node $i$ can be represented as
\begin{equation}
  r^{k}_{i,j} = \sum_{l=1}^m a_l^{i,j} f_l^k,
  \label{coded_file2}
 \end{equation}
 where $a_l^{i,j}$ is a binary value with  uniform distribution and independent of all other bits. Using regular summation (not over GF(2)) and denoting 
 $h^{i,j,k}_l \triangleq a_l^{i,j} f_l^k$, we define $ H^{i,j,k} \triangleq \sum_{l=1}^m h^{i,j,k}_l$. Therefore, $
  \mathrm{Pr}[r^k_{i,j} = 0] = \mathrm{Pr} [H^{i,j,k} \stackrel{2}{\equiv} 0]$.
Therefore, the $k^{th}$ bit of the coded file is equal to 0 if an even number of terms in $H^{i,j,k}$ is equal to 1. The probability distribution of $H^{i,j,k}$ can be computed using 
 {\em probability generating functions}. Since $h^{i,j,k}_l$ is a
 Bernoulli random variable with probability $\frac{1}{2} p_l^k$, its probability generating function is equal to  
 \begin{equation}
  G_{l}^{i,j,k} (z) = (1-\frac{1}{2} p_l^k)  + \frac{1}{2} p_l^k z.
  \label{pgf_bernoulli}
 \end{equation}
 Since $a_l^{i,j}$ and $f_l^k$ are independent random variables, $h^{i,j,k}_l$ will become independent random variables. Therefore, the probability generating function of $H^{i,j,k}$ denoted by $G_H^{i,j,k}(z)$ is the product of all probability generating functions.
  \begin{equation}
  G_H^{i,j,k} (z) = \prod_{l=1}^m \left( (1-\frac{1}{2} p_l^k)  + 
  \frac{1}{2} p_l^k z \right).
  \label{pgf_rijkl} 
 \end{equation}
Denoting the probability distribution of $H^{i,j,k}$ as $\mathfrak{h}(.)$, the probability of $H^{i,j,k}$ being even is
 \begin{align}
 & \mathrm{Pr}[H^{i,j,k} \stackrel{2}{\equiv} 0] = 
  \sum_{u=0}^{\lfloor \frac{m}{2} \rfloor} \mathfrak{h}(2 u) =
  \sum_{u=0}^{\lfloor \frac{m}{2} \rfloor} \mathfrak{h}(2 u) z^{2 u}
  \biggr\rvert_{z=1}  \nonumber \\
  &= \frac{1}{2} \left[
  \sum_{u=0}^{m} \mathfrak{h}( u) z^{ u} +  \sum_{u=0}^{m} \mathfrak{h}( u) (-z)^{ u}
  \right]_{z=1}  \nonumber \\
&=\frac{1}{2} G_H^{i,j,k} (1) +  \frac{1}{2} G_H^{i,j,k} (-1) 
=\frac{1}{2}\prod_{l=1}^m \left( (1-\frac{1}{2} p_l^k)  + 
  \frac{1}{2} p_l^k  \right) \nonumber \\
 &+  \frac{1}{2} \prod_{l=1}^m \left( (1-\frac{1}{2} p_l^k) 
  - \frac{1}{2} p_l^k  \right) 
  =\frac{1}{2} \left( 1 + \prod_{l=1}^m  \left(1-p_l^k \right) \right) \nonumber 
\end{align}
Therefore,
\begin{align}
 &\lim_{m \to \infty} \mathrm{Pr}[r^k_{i,j} = 0] 
 = \lim_{m \to \infty} \frac{1}{2} \left( 1 + \prod_{l=1}^m 
 \left(1-p_l^k \right) \right) = \frac{1}{2}+ \nonumber \\
 & \frac{1}{2} \lim_{m \to \infty}  \prod_{l=1}^m 
 \left(1-p_l^k \right) = \frac{1}{2}+ \frac{1}{2} \lim_{m \to \infty}   \left(1-\inf \{p_l^k\} \right)^m = \frac{1}{2}. \nonumber
 \label{proof_end}
\end{align}
This proves the lemma.
\end{proof}
This lemma paves the way to prove the following theorem. 
\begin{thm}{\em The proposed coded caching strategy provides asymptotic perfect secrecy for the last  hop if $m$ is large and $m <2^M$.
}\label{thm_secrecy}
\end{thm}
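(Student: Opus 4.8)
The plan is to apply the necessary and sufficient conditions of Theorem \ref{thm_shannon} to the Shannon cipher system we have set up, in which $\mathfrak{M}=F_r$, $\mathfrak{K}=\mathbf{x}_r^0$, $\mathfrak{C}=S_r=F_r+\mathbf{x}_r^0$, and $\mathfrak{e}$ is the bitwise XOR over GF(2). Since the encryption is a one-time-pad XOR on $Q$-bit strings, the message space, key space, and codeword space all coincide with $\mathbb{F}_2^Q$, so $|\mathbb{M}|=|\mathbb{K}|=|\mathbb{C}|=2^Q$ and the equal-cardinality hypothesis of Theorem \ref{thm_shannon} holds exactly. It then remains to verify the two bullet conditions: (i) for each pair $(\mathfrak{M},\mathfrak{C})$ there is a unique key with $\mathfrak{C}=\mathfrak{e}(\mathfrak{M},\mathfrak{K})$, and (ii) the key $\mathfrak{K}=\mathbf{x}_r^0$ is uniformly distributed. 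Condition (i) is immediate from the algebra of XOR: given $F_r$ and $S_r$, the key $\mathbf{x}_r^0=S_r+F_r$ (in GF(2)) is uniquely determined, since addition in $\mathbb{F}_2^Q$ is invertible and self-inverse.

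First I would dispatch condition (i) by the invertibility argument above, emphasizing that XOR gives a bijection between keys and codewords for each fixed message, which is exactly what the first bullet requires. Next I would establish condition (ii) using Lemma \ref{lem_uniform_key}, which shows that the bits of each coded cache file tend to uniform as $m\to\infty$; since the key $\mathbf{x}_r^0=\sum_{j=1}^M b_j^0\mathbf{r}_j^0$ is itself a GF(2) combination of coded cache files, its bits inherit the asymptotically uniform distribution. This is where the asymptotic and the regime $m<2^M$ enter the argument. The role of $m<2^M$ (equivalently $M>\log_2 m$) is to guarantee that the requesting node has enough cache capacity to form, with high probability, a distinct key $\mathbf{x}_r^0$ for each of the $m$ possible requests — so that the key genuinely functions as a fresh one-time pad rather than being reused across files. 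I would state this as the reason a different combination $\mathbf{x}_r^0$ is available for each $F_r$, so that the key entropy keeps pace with the message entropy in the spirit of Shannon's $\mathbb{H}(\mathfrak{K})\ge\mathbb{H}(\mathfrak{M})$ bound.

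Having checked both bullets, I would conclude by invoking Theorem \ref{thm_shannon}: since the two conditions hold in the limit $m\to\infty$, the scheme achieves perfect secrecy asymptotically, i.e.\ the eavesdropper's equivocation about $F_r$ given $S_r$ approaches $\mathbb{H}(F_r)$, so no information about the requested content leaks on the last hop. The word ``asymptotic'' in the statement is precisely what covers the fact that Lemma \ref{lem_uniform_key} only gives uniformity in the limit.

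The main obstacle I expect is making condition (ii) fully rigorous. Lemma \ref{lem_uniform_key} establishes uniformity of the \emph{marginal} distribution of each bit of a single coded file, but perfect secrecy via Theorem \ref{thm_shannon} requires the \emph{joint} distribution of the whole $Q$-bit key $\mathbf{x}_r^0$ to be uniform over $\mathbb{F}_2^Q$, and also requires independence of the key from the message $F_r$. Bridging from per-bit marginal uniformity to joint uniformity and message-independence — and handling the correlations introduced because the same encoding vectors $\mathbf{v}_j^0$ determine both which files appear in the key and the content of those files — is the delicate point. I would address it by appealing to the independence of the random encoding coefficients $a_l^{i,j}$ across bit positions and cache locations (as assumed in the construction and used in Lemma \ref{lem_uniform_key}), which lets the per-bit analysis be applied coordinate-wise and carried over to the key, so that in the limit the key behaves as a uniform one-time pad independent of $F_r$.
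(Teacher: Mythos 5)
Your proposal follows essentially the same route as the paper's proof: cast the last hop as a Shannon cipher with $\mathfrak{M}=F_r$, $\mathfrak{K}=\mathbf{x}_r^0$, $\mathfrak{C}=S_r$, verify the unique-key condition from the invertibility of XOR together with $m<2^M$, and verify key uniformity from Lemma \ref{lem_uniform_key} before invoking Theorem \ref{thm_shannon}. The obstacle you flag --- that Lemma \ref{lem_uniform_key} only establishes per-bit \emph{marginal} uniformity, whereas Theorem \ref{thm_shannon} needs \emph{joint} uniformity of the $Q$-bit key and its independence from the message --- is a genuine subtlety, but the paper's own proof silently makes the same leap, so your treatment is, if anything, more careful than the published one.
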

\begin{proof}
To formulate this as a Shannon cipher problem, we assume that $\mathfrak{M}=F_r$, $\mathfrak{K}=\mathbf{x}^0_r$, and $\mathfrak{C} = S_r$. The condition $m < 2^M$ ensures that a unique key exists for each requested message since at most $2^M$ possible random keys can be built from  $M$ cached files. The encoding function is XOR operation. For any pair $(\mathfrak{m}, \mathfrak{C}) \in (\mathbb{M}, \mathbb{C})$, a unique key $\mathfrak{K} \in \mathbb{K}$ exists such that $\mathfrak{C} = \mathfrak{m} + \mathfrak{K}$ which guarantees that $|\mathbb{M}|=|\mathbb{K}|=|\mathbb{C}|$.
 
Notice that the key  $\mathfrak{K}=\mathbf{x}^0_r$ belongs to the set of all possible bit strings with $Q$ bits. Lemma \ref{lem_uniform_key} proves that each {coded cached content is uniformly distributed among all $Q$-bit strings. Hence each key which} is a unique summation of cached encoded files {is uniformly distributed among the set of all $Q$-bit strings.}  In other words, regardless of the distribution of the bits in files, $\mathbf{x}^0_r$ can be any bit string with equal probability for large values of $m$. Therefore, conditions of Theorem \ref{thm_shannon} are met and asymptotic perfect secrecy is achieved. 
\end{proof}


\begin{rem}{\em
 In this paper, we only studied the security of last hop communications in our approach and proved that even in the most vulnerable (last) link, secure communications is possible. A more general security study for all links remains as future work. Also, the study of the security of this approach against cooperative eavesdroppers remains as future work. 
}\label{rem_last_hop}
\end{rem}

\section{Cache Hit Probability}
\label{cache_hit}
This section is dedicated to computation of cache hit probability when a node $\mathcal{N}_0$ can access $u$ other nodes $\mathcal{N}_1,\dots,\mathcal{N}_u$ or equivalently, $l=u M$ cache locations. We compute the event that $\mathcal{N}_0$ can decode any desired file in the set $\mathcal{F}$ with this information. First, let's define the cache hit probability.  
\begin{mydef}{\em
 The \emph{cache hit probability for all contents} is defined as the probability that any content can be retrieved using the cached information in nodes $\mathcal{N}_1,\dots,\mathcal{N}_u$. 
 }\label{def_hit_all}
\end{mydef}
We will first study uncoded cache hit probability. 
\subsection{Uncoded Caching}
In uncoded caching, each node is randomly choosing $M$ different contents from the set of $m$ contents. This can be modeled as a {coupon collector problem with group drawings} in which a coupon collector is collecting a number of different coupons in each time and wants to find the probability that after $u$ group collections, all the contents are collected. The result studied before \cite{stadje1990collector, johnson1977urn} and is summarized below. 
\begin{thm}{\em
 Assume that a coupon collector collects $m$ different coupons. Each time a bundle of $M$ different coupons are drawn uniformly at random. If $u$ bundles are collected, then the probability that all the coupons are collected is equal to 
 \begin{equation}
  \mathbb{P}_{\textrm{all collected}} = \sum_{j=0}^{m-M}(-1)^j \binom{m}{j} \left( \frac{\binom{m-j}{M}}{\binom{m}{M}} \right)^u
  \label{eqs_group_coupon}
 \end{equation}

 }\label{thm_urn}
\end{thm}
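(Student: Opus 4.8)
The plan is to prove the identity by inclusion--exclusion over the events that individual coupon types fail to appear; this is the standard occupancy/coupon-collector technique, here adapted to group drawings. For each coupon type $i \in \{1,\dots,m\}$, let $A_i$ denote the event that coupon $i$ is absent from all $u$ drawn bundles. The quantity sought is the probability that every coupon has been collected, which is precisely $\mathbb{P}\left[\bigcap_{i=1}^m \overline{A_i}\right]$. Applying inclusion--exclusion to the complementary union $\bigcup_{i=1}^m A_i$ gives
\begin{equation}
\mathbb{P}_{\textrm{all collected}} = \sum_{j=0}^{m} (-1)^j \sum_{\substack{S \subseteq \{1,\dots,m\} \\ |S|=j}} \mathbb{P}\left[\bigcap_{i \in S} A_i\right],
\nonumber
\end{equation}
so the problem reduces to evaluating the joint probability that a fixed set $S$ of $j$ coupon types is entirely missing.

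First I would compute this joint probability for a single bundle. Because each bundle consists of $M$ distinct coupons chosen uniformly at random from the $m$ types, the event that one bundle avoids every coupon in $S$ --- equivalently, that all $M$ of its coupons come from the $m-j$ types outside $S$ --- has probability $\binom{m-j}{M}/\binom{m}{M}$. Next, since the $u$ bundles are drawn independently, the probability that all of them simultaneously avoid $S$ is the $u$-th power, $\bigl(\binom{m-j}{M}/\binom{m}{M}\bigr)^{u}$. By the symmetry of uniform selection this value depends only on the cardinality $j=|S|$ and not on which specific coupons lie in $S$, so the inner sum over the $\binom{m}{j}$ subsets of size $j$ collapses to the single factor $\binom{m}{j}$ times this power.

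Substituting back yields $\sum_{j=0}^{m} (-1)^j \binom{m}{j} \bigl(\binom{m-j}{M}/\binom{m}{M}\bigr)^{u}$, and the final observation is that $\binom{m-j}{M}=0$ whenever $m-j<M$, i.e.\ for $j>m-M$; these terms vanish automatically, so the summation range truncates to $j=0,\dots,m-M$, matching \eqref{eqs_group_coupon}. The only real care needed is the symmetry step --- verifying that the single-bundle avoidance probability is genuinely exchangeable in the coupon labels (which holds because every $M$-subset of the $m$ types is equally likely) together with the independence of the $u$ bundles, which is the correct modelling assumption for group drawings. Once these two facts are in place the remaining computation is routine, so I expect no substantive obstacle; since the result is classical I would additionally cite \cite{stadje1990collector, johnson1977urn} for the original derivations.
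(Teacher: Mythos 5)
Your proof is correct, and it is genuinely different from what the paper does: the paper offers no derivation at all for this theorem, simply pointing to \cite{stadje1990collector} (as a special case of Theorem 1 there) and to page 164 of \cite{johnson1977urn}. Your argument is the standard inclusion--exclusion over the events $A_i$ that coupon $i$ is missed, and each step checks out: the single-bundle avoidance probability $\binom{m-j}{M}/\binom{m}{M}$ is exactly the fraction of $M$-subsets contained in the $m-j$ types outside $S$; independence of the $u$ bundles justifies raising it to the $u$-th power; exchangeability of the coupon labels collapses the inner sum to the factor $\binom{m}{j}$; and the truncation of the range to $j \le m-M$ follows from $\binom{m-j}{M}=0$ when $m-j<M$, which is also the probabilistically correct statement that a bundle of $M$ distinct coupons cannot avoid more than $m-M$ types. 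What your route buys is a short, self-contained, verifiable proof of precisely the formula used in the paper's cache-hit analysis; what the paper's citation buys is access to the more general results of Stadje (which cover broader group-drawing schemes and asymptotics beyond this special case) without lengthening the exposition. Since you also cite the same two references, your write-up strictly dominates the paper's in completeness for this particular statement.
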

\begin{proof}
 The proof can be found in \cite{stadje1990collector} (a special case of Theorem 1) and also in page 164 of \cite{johnson1977urn}.
\end{proof}
Result of Theorem \ref{thm_urn} is the cache hit probability for collecting all contents when uncoded caching is used. 
\subsection{Coded caching}
In coded caching, we compute the probability that there exists $m$ linearly independent vectors within $l = u M $ random  encoding vectors.
If $m > l = u M$, this probability is clearly zero. Therefore, without loss of generality, we compute this probability when  $l = u M \ge m$. 
This problem has been studied in  literature \cite{kolchin1999random} and the results are summarized below.
\begin{thm}{\em 
Let $l \ge m \ge 1$ and $s $ be positive integers and $r = l - m$. If $A=[a_{ij}]$ is an $l \times m$ matrix whose elements are independent binary uniform random variables and $\rho_m(l)$ is the rank of matrix $A$ in GF(2), then if $m \to \infty$ we have  
\begin{align}
 \mathbb{P}[\rho_m(l) = m - s] \to 2^{-s(s+r)} \prod_{i=s+1}^{\infty}  \left(1 - \frac{1}{2^i} \right) \nonumber \\ 
 \times \prod_{j=1}^{r+s} \left(1 - \frac{1}{2^j} \right)^{-1}, 
 \label{eqs_kolchin_eqsss1}
\end{align}
where the last product equals 1 for $s+r=0$.
} \label{thm_kolchin}
\end{thm}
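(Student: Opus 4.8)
The plan is to reduce the statement to an exact finite-$m$ formula for the rank distribution of a uniform random binary matrix and then carry out a careful asymptotic expansion. First I would count, for fixed $l \ge m$ and $0 \le k \le m$, the number of $l \times m$ matrices over $\mathbb{F}_2$ of rank exactly $k$. Such a matrix is determined by choosing its $k$-dimensional row space $V \subseteq \mathbb{F}_2^m$ and then filling the $l$ rows with vectors of $V$ that together span $V$. There are $\binom{m}{k}_2$ choices for $V$ (the Gaussian binomial coefficient, i.e.\ the number of $k$-dimensional subspaces of $\mathbb{F}_2^m$), and $\prod_{i=0}^{k-1}(2^l - 2^i)$ ordered $l$-tuples of vectors in $V \cong \mathbb{F}_2^k$ that span it (equivalently, $l \times k$ matrices of full column rank). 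Dividing by the total count $2^{lm}$ gives the exact identity
\[
 \mathbb{P}[\rho_m(l) = k] = \frac{\binom{m}{k}_2 \prod_{i=0}^{k-1}(2^l - 2^i)}{2^{lm}}.
\]
One may alternatively derive this by building the matrix row by row and analysing the Markov chain in which the current rank $d$ increases by one with probability $1 - 2^{d-l}$; both routes yield the same expression.

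Second, I would specialise to $k = m - s$ and $l = m + r$ and extract the powers of two. Writing $2^l - 2^i = 2^l(1 - 2^{i-l})$ converts $\prod_{i=0}^{m-s-1}(2^l - 2^i)$ into $2^{(m+r)(m-s)}$ times $\prod_{j=r+s+1}^{m+r}(1 - 2^{-j})$ after the reindexing $j = l - i$, and a parallel expansion of $\binom{m}{m-s}_2 = \binom{m}{s}_2$ produces a factor $2^{sm - s^2}$ together with a ratio of $(1 - 2^{-t})$-products and correction terms $\prod_{i=0}^{s-1}(1 - 2^{-(m-i)})$ that tend to $1$. Collecting the exponent of $2$ in the numerator against $2^{lm}$ in the denominator, the bookkeeping must show it collapses exactly to $-s(s+r)$; the remaining bounded factors must then be rearranged, using the identity $\prod_{j=a+1}^{\infty}(1 - 2^{-j}) = \prod_{i=1}^{\infty}(1 - 2^{-i}) / \prod_{j=1}^{a}(1 - 2^{-j})$, into the claimed form $2^{-s(s+r)} \prod_{i=s+1}^{\infty}(1 - 2^{-i}) \prod_{j=1}^{r+s}(1 - 2^{-j})^{-1}$.

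Finally, I would pass to the limit $m \to \infty$. Each correction term of the shape $(1 - 2^{-(m-i)})$ tends to $1$, and the partial product $\prod_{j=r+s+1}^{m+r}(1 - 2^{-j})$ converges to $\prod_{j=r+s+1}^{\infty}(1 - 2^{-j})$; this is legitimate because $\sum_i 2^{-i} < \infty$ forces the infinite product to converge absolutely, so the limit may be interchanged with the product. The hard part is not any single estimate but the exponent-and-index bookkeeping in the middle step: one must track the power of two through two separate products and reindex the $(1 - 2^{-j})$ factors so that they telescope into exactly the stated ranges $i \ge s+1$ and $1 \le j \le r+s$. Since this is a known result, one could instead simply invoke \cite{kolchin1999random}, but the counting argument above yields a self-contained derivation.
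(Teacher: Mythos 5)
Your proof is correct, but it takes a genuinely different route from the paper: the paper does not prove this theorem at all --- its ``proof'' is a one-line citation to Theorem 3.2.1, p.~126 of Kolchin's book. Your derivation is self-contained and the key steps check out: the exact identity $\mathbb{P}[\rho_m(l)=k] = \binom{m}{k}_2 \prod_{i=0}^{k-1}(2^l-2^i)\big/2^{lm}$ is the standard subspace-counting formula (each rank-$k$ matrix has a unique row space, so there is no double counting); the power of $2$ collapses to $l(m-s) + (sm - s^2) - lm = -s(s+r)$ as required; the correction factors $\prod_{i=0}^{s-1}\bigl(1-2^{-(m-i)}\bigr)$ tend to $1$ (there are only $s$ of them, $s$ fixed); and the remaining factors rearrange correctly, since $\prod_{j= r+s+1}^{\infty}(1-2^{-j})\big/\prod_{t=1}^{s}(1-2^{-t})$ and $\prod_{i= s+1}^{\infty}(1-2^{-i})\big/\prod_{j=1}^{r+s}(1-2^{-j})$ both equal $\prod_{j=1}^{\infty}(1-2^{-j})$ after clearing denominators, so the two forms coincide. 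What your approach buys is an elementary, verifiable proof in place of an external reference; what the citation buys the paper is brevity and access to Kolchin's surrounding machinery (which covers more general ensembles than the uniform i.i.d.\ case needed here). One small slip in an aside: in the row-by-row Markov chain the rank increases with probability $1-2^{d-m}$, not $1-2^{d-l}$; the expression $1-2^{d-l}$ corresponds to revealing the matrix column by column. This does not affect your main argument, which never uses that chain.
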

\begin{proof}
This is Theorem 3.2.1 in page 126 of \cite{kolchin1999random}.
\end{proof}
\begin{corol}{\em 
Let $l \ge m$ and $A=[a_{ij}]$ be an $l \times m$ matrix whose elements are independent binary uniform random variables and $\rho_m(l)$ be the rank of matrix $A$ in GF(2). If $m \to \infty$, then 
 \begin{align}
 \mathbb{P}[\rho_m(l) = m] \to  \prod_{i=l-m+1}^{\infty} \left(1 - \frac{1}{2^i} \right). 
 \label{eqs_fullrank_corol}
\end{align}
 }\label{corol_kolchin_fullrank}
\end{corol}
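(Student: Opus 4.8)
The plan is to obtain this corollary as the full-rank specialization of Theorem \ref{thm_kolchin}, since the event $\{\rho_m(l) = m\}$ is precisely $\{\rho_m(l) = m - s\}$ with $s = 0$. First I would substitute $s = 0$ into the limiting expression \eqref{eqs_kolchin_eqsss1}. The prefactor $2^{-s(s+r)}$ collapses to $2^{0} = 1$, the first product becomes $\prod_{i=1}^{\infty}\left(1 - 2^{-i}\right)$, and the correction product becomes $\prod_{j=1}^{r}\left(1 - 2^{-j}\right)^{-1}$, where $r = l - m$. The special clause $s + r = 0$ in Theorem \ref{thm_kolchin} handles the boundary case $r = 0$, in which the correction product is empty and equals $1$.

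The only genuine step is the algebraic simplification of these two products into a single one. Writing the limit as the ratio $\prod_{i=1}^{\infty}\left(1 - 2^{-i}\right) \big/ \prod_{j=1}^{r}\left(1 - 2^{-j}\right)$, the denominator cancels the first $r$ factors of the numerator, leaving $\prod_{i=r+1}^{\infty}\left(1 - 2^{-i}\right)$. Substituting $r = l - m$ yields the lower index $r + 1 = l - m + 1$, and hence exactly $\prod_{i=l-m+1}^{\infty}\left(1 - 2^{-i}\right)$, which is the claimed expression.

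There is essentially no hard part: the result is a direct specialization, and the infinite products are convergent (each factor lies in $(0,1)$ and the tail converges since $\sum_i 2^{-i} < \infty$), so the cancellation is rigorous and no analytic subtlety arises. The one point I would make explicit is that $s = 0$ is an admissible value in Theorem \ref{thm_kolchin}, as already signaled by its own clause treating $s + r = 0$; invoking the theorem at $s = 0$ therefore requires no separate argument for the full-rank case.
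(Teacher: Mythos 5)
Your proof is correct and follows exactly the route the paper intends: the corollary is stated as an immediate specialization of Theorem \ref{thm_kolchin} (the paper leaves the substitution implicit), and your computation setting $s=0$, cancelling $\prod_{j=1}^{r}\left(1-2^{-j}\right)$ against the first $r$ factors of $\prod_{i=1}^{\infty}\left(1-2^{-i}\right)$, and substituting $r=l-m$ is precisely that derivation.
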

Equation \eqref{eqs_fullrank_corol} is the cache hit probability for coded caching approach. 

\begin{rem}{\em 
The cache hit probability for coded caching very quickly approaches 1 if $l$ is slightly larger than $m$. In fact, there is a very sharp transitioning of the probability from 0 to 1 in coded caching around the point $l = m$. However, in uncoded caching, $l$ should be much larger than $m$ in order for the cache hit probability tends to 1 (see Figure \ref{fig_m100}). 
}\label{rem_compare_cache_hit}
\end{rem}

\begin{rem}{\em
This result demonstrates that coded caching scheme utilizes the cache space efficiently and avoids over-caching unlike uncoded caching approach. 
}\label{rem_overcaching} 
\end{rem}

\section{Cache Update Algorithm}
\label{cache_update}
In this section, a caching update algorithm is described. Let's assume a new content $F_{\textrm{new}}$ should replace another content $F_{k}$ based on some replacement policy {such as  Least Recently Used (LRU) or  Least Frequently Used (LFU) policy}. {The network  controller uses a bit scrambling technique to create a file $F_{\textrm{new}}^{'}$ with uniform bit distribution from $F_{\textrm{new}}$. Such bit scrambling techniques are widely used in communication systems to give the transmitted data useful engineering properties \cite{hui2003method}. 
Notice that the bit scrambling technique makes $F_{\textrm{new}}^{'}$ equivalent of temporary secret key with uniform distribution.} 
The network  controller then generates $F_{k} + F_{\textrm{new}}^{'}$ and broadcasts this file to the network nodes. When node $N_i$ receives $F_{k} + F_{\textrm{new}}^{'}$, it will add $F_{k} + F_{\textrm{new}}^{'}$ to all of its cached encoded files $\mathbf{r}^i_j$ which contain $F_{k}$, i.e., all $\mathbf{r}_j^i$'s for which $a_k^{i,j}=1$. In other words, if in the $j^{th}$ location of node $i$ we have 
\begin{equation}
   \mathbf{r}_j^i = F_k+\sum_{\substack{l=1 \\ l \neq k}}^{m} a_l^{i,j} F_l, 
   \label{eq_def_end}
\end{equation}
then $F_{k} + F_{\textrm{new}}^{'}$ will be added to $\mathbf{r}^i_j$. This replaces $F_k$ with $F_{\textrm{new}}^{'}$ in all encoded files that contains $F_k$. If some cached encoded files does not contain $F_k$, then no action is required for those encoded files. {Nodes can then decode $F_{\textrm{new}}^{'}$ using the same decoding gains as for $F_k$ without any additional overhead. When $F_{\textrm{new}}^{'}$ is decoded, then a de-scrambling algorithm can be used to recover $F_{\textrm{new}}$ from $F_{\textrm{new}}^{'}$.} 
A pseudocode representation of our caching update protocol is shown in Algorithm \ref{alg:1}.

\begin{algorithm}
    \caption{Cache Update Algorithm}
    \label{alg:1}
    \begin{algorithmic}[1]
        \Procedure{Cache Update}{}
        \State {\bf Find} {the content $F_{k}$ that should be replaced}. 
        \State {\bf  Scramble} {$F_{\textrm{new}}$ to get $F_{\textrm{new}}^{'}$ with uniform bits.}
        \State {\bf Encode} the new content $F_{\textrm{new}}^{'}$ with $F_{k}$ as $F_{\textrm{new}}^{'} \oplus F_{k}$.
        \State {\bf Broadcast} $F_{\textrm{new}}^{'} \oplus F_{k}$ to all the nodes.
        \For{ the $j^{th}$ cache location of node $i$} 
        \If{$F_{k}$ is used in encoding $\mathbf{r}^i_j$ (i.e. $a_k^{i,j}=1$)} 
        \State {\bf Update} the $j^{th}$ cache location of node $i$ with  
        \Statex~~~~~~~~~~~~ $F_{\textrm{new}}^{'} \oplus F_{k} \oplus \mathbf{r}^i_j$. 
        \EndIf
        \EndFor 
        \EndProcedure
	\Statex       
    \end{algorithmic}
\end{algorithm}

{ Notice that during the caching update phase none of the contents is transmitted and any eavesdropper would only receive encoded version of the files. Therefore, with this caching update technique, the contents can be updated securely.}

\section{Simulation}
\label{sim_sec}
This section verifies the analytical results derived earlier via simulations.
%
Figure \ref{fig_hop} compares the average number of hops required to decode the contents in decentralized coded and uncoded caching schemes.
We consider a wireless ad hoc network with $n=1000$ nodes and $m=100$ contents. 
The simulation results clearly demonstrate that decentralized coded caching outperforms uncoded case particularly when the cache size is small which is the most likely operating regime. For instance, with decentralized coded content caching, a cache of size 25 only requires less than 5 hops while decentralized uncoded content caching needs around 20 hops for successful content retreival. This makes coded content caching much more practical compared to uncoded content caching. Note that capacity is inversely proportional to the average hop counts. As can be seen from Figure \ref{fig_hop}, for small cache sizes, coded caching significantly reduces the number of hops required to decode the contents. This property is important for nodes with small storage capability since large number of hops can impose excessive delay and low quality of service.

Figure \ref{fig_hop} proves another important result that content retrieval can be optimally done in any random direction. In this plot, we have used random directions of east, west, south and north for content retrieval using coded caching and shown that the average number of hops in any of these random directions is the same. The four plots corresponding to these four directions is so close that it is hard to distinguish  them.

\begin{figure}
    \center
      \includegraphics[scale=0.5,angle=0]{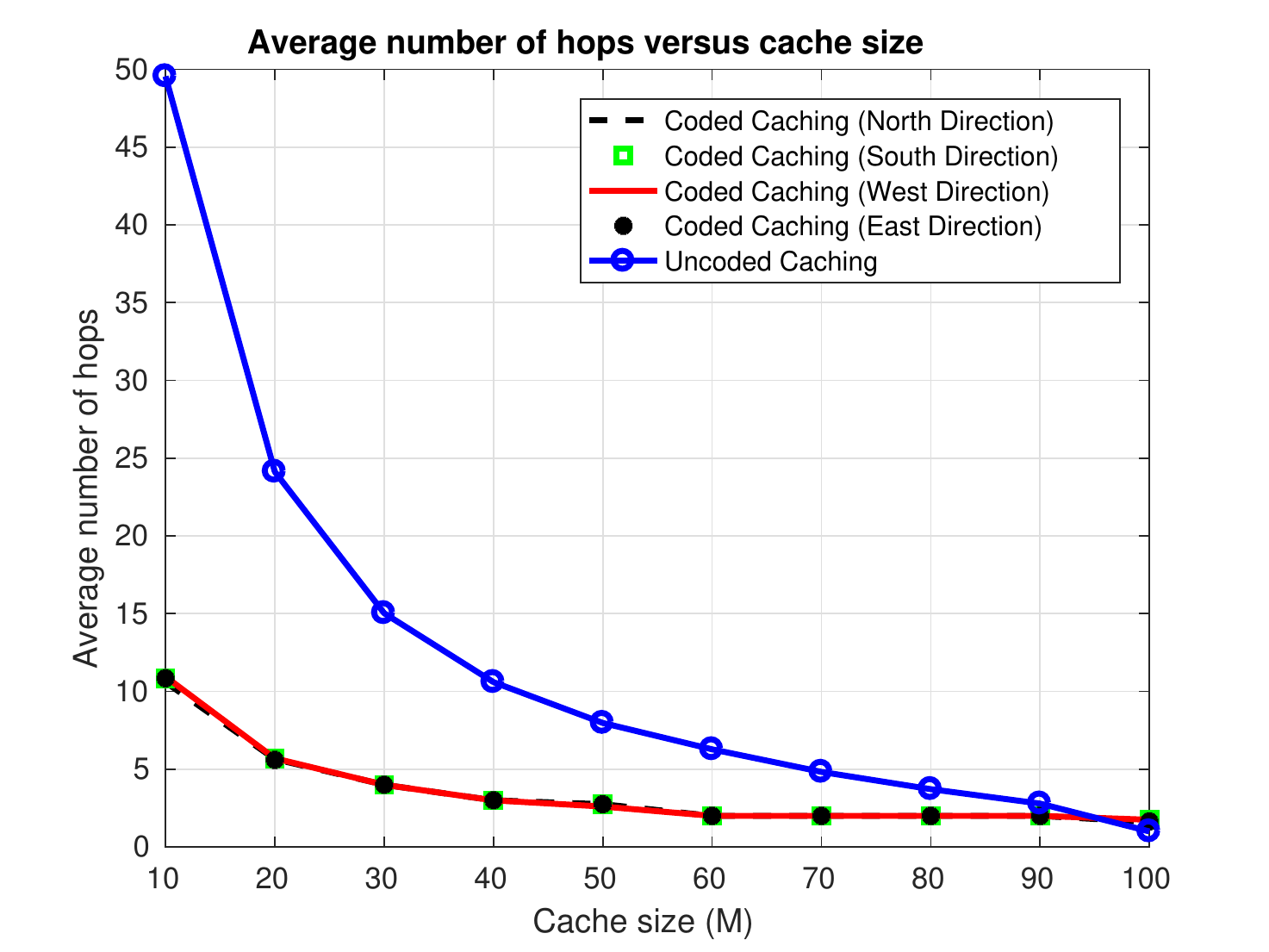}
\caption{Average required number of hops (nodes) for coded and uncoded caching schemes to retrieve a content.}
\vspace{-0.2in}
\label{fig_hop}
\end{figure}

 Figure \ref{fig_m100} shows the simulation results for different values of $M$ when $m=100$. \textcolor{black}{The cache hit probability is plotted as a function of the number of  cached contents $l$ and for different cache sizes $M$. However, for each fixed value of $l$, the number of nodes  $\mu = \frac{l}{M}$ will be different depending on $M$. For instance, at $l=400$, $M$ and $\mu$ are 25 and 16 respectively.} The simulation results are validating the theoretical results in {Theorem}  \ref{thm_urn} and {Corollary}  \ref{corol_kolchin_fullrank}. As can be seen from {this figure}, the cache hit probability for coded caching is much higher than that of uncoded caching. Also, it is clear from this plot that the cache hit probability of coded caching  approaches 1  rapidly when $l$ starts to be greater than $m$. For uncoded caching, specially when $m$ is large, the receiver should access a much larger number of cached contents in order for the cache hit probability to approach 1. Therefore, in networks with a large number of contents our coded caching approach will quickly achieve a cache hit probability close to one with a much smaller number of cache locations. This is a significant benefit of our technique in reducing  overcaching.  

\begin{figure}
    \center
      \includegraphics[scale=0.45,angle=0]{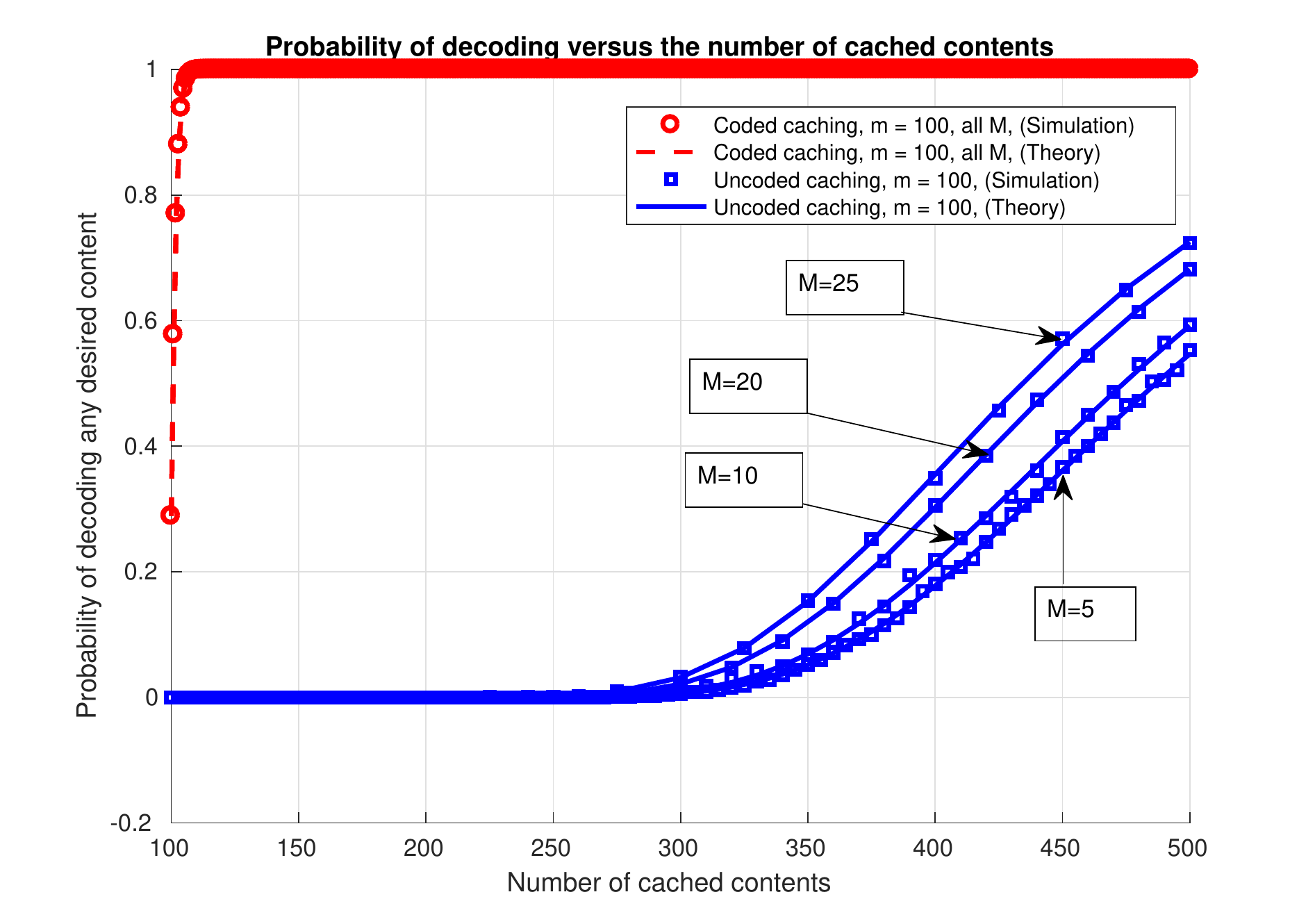}
\caption{Cache hit probability for any desired content when $m=100$.} 
\vspace{-0.2in}
\label{fig_m100}
\end{figure}

\section{Conclusions}
\label{conc_sec}
This paper introduces a novel decentralized coded caching strategy in wireless ad hoc networks. The capacity of this approach is compared with that of uncoded caching for proactive and reactive routing protocols. While with proactive routing protocol, uncoded caching outperforms coded caching, coded caching performs better with reactive routing protocol. Interestingly, it was shown that by choosing any random direction, close to optimum number of hops can be obtained to retrieve any content in coded caching.

It has been  shown that coded caching approach provides asymptotic perfect secrecy during file transmission. We have also studied the cache hit problem and shown that the cache hit probability for any desired content will be significantly higher in the proposed technique compared to uncoded caching. This technique reduces the problem of overcaching in the networks. An efficient and secure cache update algorithm is also proposed and the results are validated with simulations. 

The paper considered a static network, however our results can be extended to mobile or vehicular networks when the transmission time is much smaller than the network dynamics.
Further, the size of the file chunks can be adjusted based on the network dynamics to adapt to the rate of change in the network evolution. 

This paper considered random linear fountain codes to achieve asymptotic perfect secrecy and better cache hit probability and capacity results. 
 Other fountain coding choices for cache placement may reduce the decoding complexity while reducing the capacity and cache hit probability. Selection of appropriate fountain code depends on many factors such as  decoding complexity, delay and capacity requirements.

\bibliographystyle{plain}
\bibliography{All-Papers-J5}

\begin{thebibliography}{10}

\bibitem{bloch2011physical}
Matthieu Bloch and Joao Barros.
\newblock {\em Physical-layer security: from information theory to security
  engineering}.
\newblock Cambridge University Press, 2011.

\bibitem{cai2002secure}
Ning Cai and Raymond~W Yeung.
\newblock Secure network coding.
\newblock In {\em Proceedings of {IEEE} International Symposium on Information
  Theory, {ISIT}}, page 323. IEEE, 2002.

\bibitem{DBLP:conf/infocom/CaoYYLH12}
Ning Cao, Shucheng Yu, Zhenyu Yang, Wenjing Lou, and Y.~Thomas Hou.
\newblock {LT} codes-based secure and reliable cloud storage service.
\newblock In {\em Proceedings of the {IEEE} {INFOCOM} 2012, Orlando, FL, USA,
  March 25-30}, pages 693--701, 2012.

\bibitem{DBLP:journals/tit/DimakisGWWR10}
Alexandros~G. Dimakis, Brighten Godfrey, Yunnan Wu, Martin~J. Wainwright, and
  Kannan Ramchandran.
\newblock Network coding for distributed storage systems.
\newblock {\em {IEEE} Trans. Information Theory}, 56(9):4539--4551, 2010.

\bibitem{DBLP:conf/icassp/DimakisPR06}
Alexandros~G. Dimakis, Vinod~M. Prabhakaran, and Kannan Ramchandran.
\newblock Distributed fountain codes for networked storage.
\newblock In {\em Proceedings of International Conference on Acoustics Speech
  and Signal Processing, {ICASSP} 2006, Toulouse, France, May 14-19}, pages
  1149--1152, 2006.

\bibitem{DBLP:journals/pieee/DimakisRWS11}
Alexandros~G. Dimakis, Kannan Ramchandran, Yunnan Wu, and Changho Suh.
\newblock A survey on network codes for distributed storage.
\newblock {\em Proceedings of the {IEEE}}, 99(3):476--489, 2011.

\bibitem{erdHos1961classical}
Paul Erd{\H{o}}s.
\newblock On a classical problem of probability theory.
\newblock 1961.

\bibitem{hui2003method}
Yan Hui.
\newblock Method and apparatus for data scrambling/descrambling, May~29 2003.
\newblock US Patent App. 09/997,639.

\bibitem{DBLP:conf/icc/JeonHJC15}
Sang{-}Woon Jeon, Song{-}Nam Hong, Mingyue Ji, and Giuseppe Caire.
\newblock Caching in wireless multihop device-to-device networks.
\newblock In {\em Proceedings of International Conference on Communications,
  {ICC} 2015, London, United Kingdom, June 8-12}, pages 6732--6737, 2015.

\bibitem{johnson1977urn}
Norman~Lloyd Johnson and Samuel Kotz.
\newblock Urn models and their application; an approach to modern discrete
  probability theory.
\newblock 1977.

\bibitem{DBLP:journals/twc/KiskaniAS16}
Mohsen~Karimzadeh Kiskani, Bita Azimdoost, and Hamid~R. Sadjadpour.
\newblock Effect of social groups on the capacity of wireless networks.
\newblock {\em {IEEE} Trans. Wireless Communications}, 15(1):3--13, 2016.

\bibitem{kiskani2016capacity}
Mohsen~Karimzadeh Kiskani and Hamid~R. Sadjadpour.
\newblock Capacity of cellular networks with femtocache.
\newblock In {\em Proceedings of the {IEEE} Conference on Computer
  Communications Workshops, {INFOCOM} Workshops, San Francisco, USA, April 10 -
  15}, 2016.

\bibitem{DBLP:journals/tvt/KiskaniS17}
Mohsen~Karimzadeh Kiskani and Hamid~R. Sadjadpour.
\newblock Multihop caching-aided coded multicasting for the next generation of
  cellular networks.
\newblock {\em {IEEE} Trans. Vehicular Technology}, 66(3):2576--2585, 2017.

\bibitem{DBLP:conf/iccnc/KiskaniS17}
Mohsen~Karimzadeh Kiskani and Hamid~R. Sadjadpour.
\newblock Secure coded caching in wireless ad hoc networks.
\newblock In {\em Proceedings of the International Conference on Computing,
  Networking and Communications, {ICNC} 2017, Silicon Valley, CA, USA, January
  26-29}, pages 387--391, 2017.

\bibitem{DBLP:journals/twc/KiskaniS17}
Mohsen~Karimzadeh Kiskani and Hamid~R. Sadjadpour.
\newblock Throughput analysis of decentralized coded content caching in
  cellular networks.
\newblock {\em {IEEE} Trans. Wireless Communications}, 16(1):663--672, 2017.

\bibitem{kolchin1999random}
Valentin~Fedorovich Kolchin.
\newblock {\em Random graphs}.
\newblock Number~53. Cambridge University Press, 1999.

\bibitem{DBLP:journals/jsac/KongAS10}
Zhenning Kong, Salah~A. Aly, and Emina Soljanin.
\newblock Decentralized coding algorithms for distributed storage in wireless
  sensor networks.
\newblock {\em {IEEE} Journal on Selected Areas in Communications},
  28(2):261--267, 2010.

\bibitem{DBLP:journals/tit/KulkarniV04}
Sanjeev~R. Kulkarni and Pramod Viswanath.
\newblock A deterministic approach to throughput scaling in wireless networks.
\newblock {\em {IEEE} Trans. Information Theory}, 50(6):1041--1049, 2004.

\bibitem{lima2007random}
Luisa Lima, Muriel M{\'e}dard, and Joao Barros.
\newblock Random linear network coding: A free cipher?
\newblock In {\em Proceedings of the International Symposium on Information
  Theory, {ISIT} 2007}, pages 546--550. IEEE, 2007.

\bibitem{DBLP:conf/focs/Luby02}
Michael Luby.
\newblock {LT} codes.
\newblock In {\em Proceedings of the 43rd Symposium on Foundations of Computer
  Science, {FOCS} 2002, Vancouver, BC, Canada}, page 271, 16-19 November 2002.

\bibitem{mackay2005fountain}
David~JC MacKay.
\newblock Fountain codes.
\newblock {\em IEE Proceedings-Communications}, 152(6):1062--1068, 2005.

\bibitem{DBLP:journals/tit/Maddah-AliN14}
Mohammad~Ali Maddah{-}Ali and Urs Niesen.
\newblock Fundamental limits of caching.
\newblock {\em {IEEE} Trans. Information Theory}, 60(5):2856--2867, 2014.

\bibitem{DBLP:journals/ton/Maddah-AliN15}
Mohammad~Ali Maddah{-}Ali and Urs Niesen.
\newblock Decentralized coded caching attains order-optimal memory-rate
  tradeoff.
\newblock {\em {IEEE/ACM} Trans. Netw.}, 23(4):1029--1040, 2015.

\bibitem{shannon1949communication}
Claude~E Shannon.
\newblock Communication theory of secrecy systems.
\newblock {\em Bell system technical journal}, 28(4):656--715, 1949.

\bibitem{DBLP:journals/wc/ShiuCWHC11}
Yi{-}Sheng Shiu, Shih~Yu Chang, Hsiao{-}Chun Wu, Scott~C.{-}H. Huang, and
  Hsiao{-}Hwa Chen.
\newblock Physical layer security in wireless networks: a tutorial.
\newblock {\em {IEEE} Wireless Commun.}, 18(2):66--74, 2011.

\bibitem{DBLP:journals/tit/Shokrollahi06}
Amin Shokrollahi.
\newblock Raptor codes.
\newblock {\em {IEEE} Trans. Information Theory}, 52(6):2551--2567, 2006.

\bibitem{stadje1990collector}
Wolfgang Stadje.
\newblock The collector's problem with group drawings.
\newblock {\em Advances in Applied Probability}, pages 866--882, 1990.

\bibitem{vahidian2015relay}
Saeed Vahidian, Sonia A{\"\i}ssa, and Sajad Hatamnia.
\newblock Relay selection for security-constrained cooperative communication in
  the presence of eavesdropper's overhearing and interference.
\newblock {\em IEEE Wireless Communications Letters}, 4(6):577--580, 2015.

\bibitem{wang2012lt}
Yongge Wang.
\newblock {LT} codes for efficient and reliable distributed storage systems
  revisited.
\newblock {\em arXiv preprint arXiv:1207.5542}, 2012.

\bibitem{DBLP:journals/ftnet/XueK06}
Feng Xue and Panganamala~R. Kumar.
\newblock Scaling laws for ad hoc wireless networks: An information theoretic
  approach.
\newblock {\em Foundations and Trends in Networking}, 1(2), 2006.

\end{thebibliography}

\begin{IEEEbiography}
[{\includegraphics[width=1in,height=1.25in,clip,keepaspectratio]{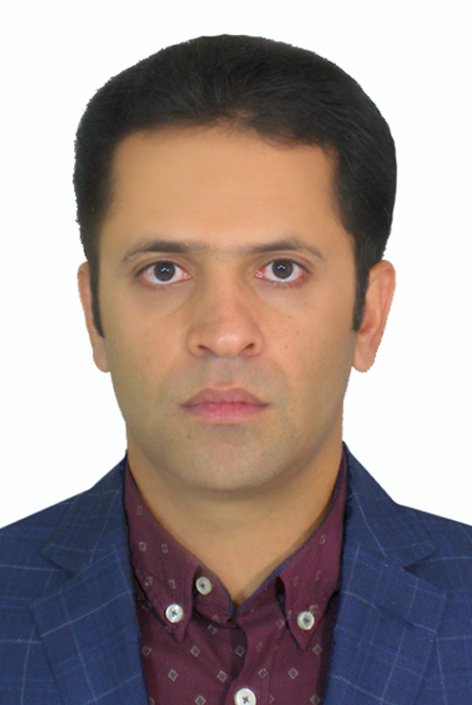}}] {Mohsen Karimzadeh Kiskani} received his B.S. degree in Mechanical Engineering and M.S. degree in Electrical Engineering from Sharif University of Technology in 2008 and 2010, respectively. He is currently a Ph.D. candidate in Electrical Engineering department at University of California, Santa Cruz. 
His  research interests  include wireless communications, information theory, and the application of fountain codes, LT codes and Raptor codes in wireless networks and storage systems.
He is also interested in  complexity study of Constraint Satisfaction Problems (CSP) in Computer Science. In June 2016 he obtained a M.S. degree in Computer Science from University of California Santa Cruz. 
\end{IEEEbiography}
\begin{IEEEbiography}
[{\includegraphics[width=1in,height=1in,clip,keepaspectratio]{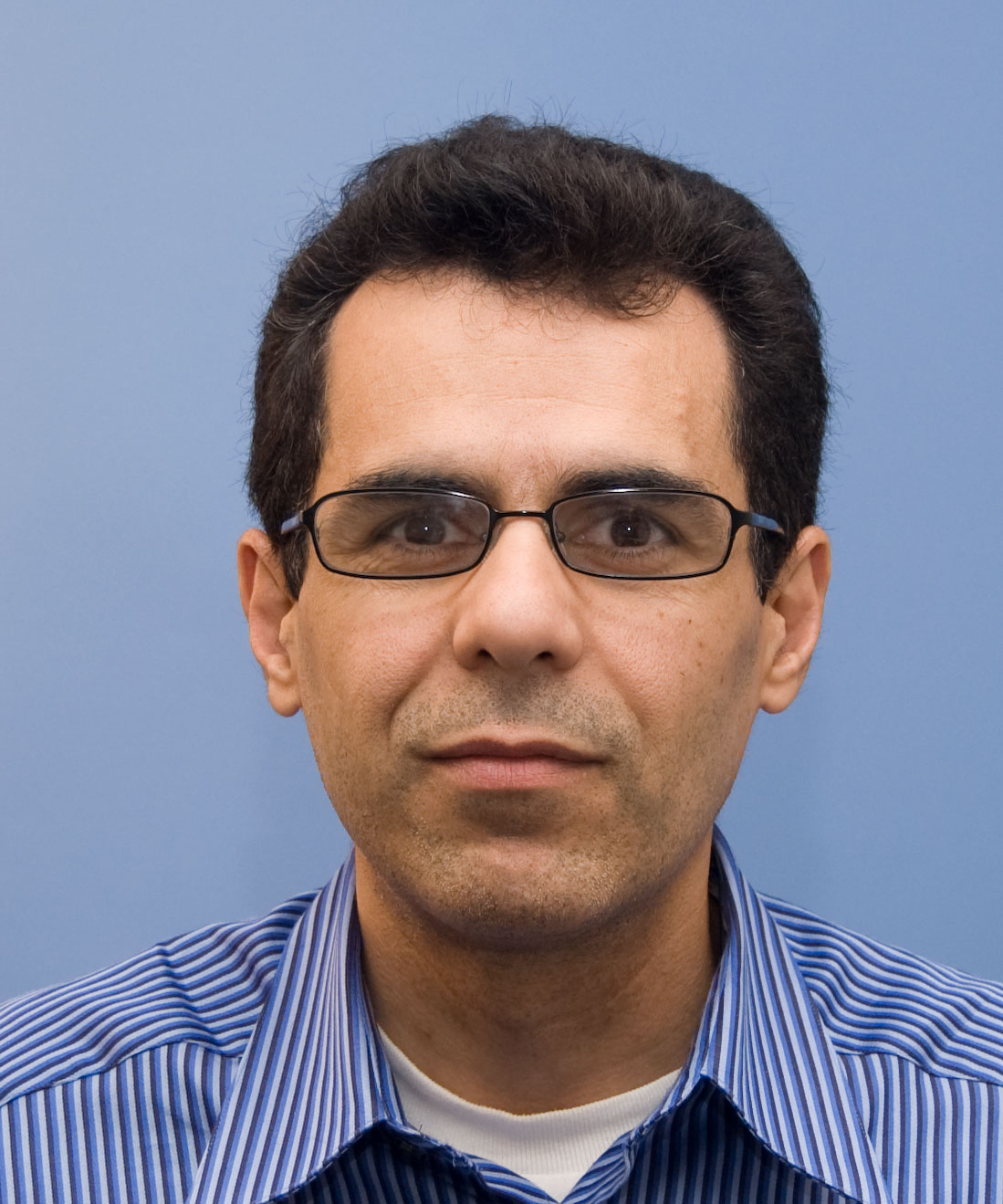}}] {Hamid Sadjadpour}
(S’94–M’95–SM’00) received Ph.D. degree in Electrical Engineering from  University of Southern California at Los Angeles, CA. In 1995, he joined
 AT\&T Research Laboratory in Florham Park, NJ as a Technical Staff Member and later as a Principal
Member of Technical Staff. In 2001, he joined  University of California at Santa Cruz, CA,
where he is currently a Professor. His research interests are in the general areas of wireless communications and networks. \end{IEEEbiography}

\end{document}